\newtheorem{theorem}{Theorem}
\newtheorem{lemma}{Lemma}
\newtheorem{proposition}{Proposition}
\newtheorem{definition}{Definition}
\begin{document}
%
\title{TripNet: A Heuristic Algorithm for Constructing Rooted Phylogenetic Networks from Triplets}

\author{ H. Poormohammadi,%
      \and
         Ch. Eslahchi,%
     
       \and
        and R. Tusserkani%
     
      }

\maketitle

\IEEEcompsocitemizethanks{\IEEEcompsocthanksitem H. Poormohammadi and Ch. Eslahchi are with department of Mathematics, Shahid Beheshti University, G.C. Tehran, Iran .\protect\\
E-mail: see http://www.ch-eslahchi@sbu.ac.ir
\IEEEcompsocthanksitem R. Tusserkani is with school of Mathematics, Institute for Research in Fundamental Sciences (IPM), Tehran, Iran.}

\begin{abstract}
The problem of constructing an optimal rooted phylogenetic 
network from a set of rooted triplets is an NP-hard problem. 
In this paper, we present a heuristic algorithm called TripNet
which tries to construct an optimal rooted phylogenetic network 
from an arbitrary set of triplets.
We prove some theorems to justify the performance of the algorithm. 
\end{abstract}

\begin{keywords}
Rooted phylogenetic network, Rooted triplet, Quartet, Directed
acyclic graph, Height function.
\end{keywords}

\section{Introduction}

Phylogenetic networks are a generalization of phylogenetic trees that permit
the representation of non-tree-like underlying histories. A rooted
phylogenetic network is a rooted directed acyclic graph 
in which no nodes has indegree greater than 2 and the outdegree of each node with indegree 2 is 1.
Such nodes are called reticulation nodes.
Mathematicians are interested in developing methods
that infer a phylogenetic tree or network from basic building blocks. In the
computation of a rooted tree or network, one group of the basic building blocks are triplets, the rooted binary trees on
three taxa \cite{hussonbook}. 
In 1981, Aho et al., studied the problem of constructing a
tree from a set of triplets \cite{AhoSagiv1}. They proposed an algorithm called the BUILD algorithm which shows that,
given a set of triplets, it is possible to construct in polynomial
time a rooted tree that all the input triplets are contained in
it or decide that no such tree exists. 
When there is no tree for a given set of triplets one may try to produce an optimal phylogenetic network.
In this context, the goal is to compute an optimal rooted phylogenetic
network that contains all the rooted triplets. One possible optimality criterion is to minimize the {\it level} of the network, which
is defined as the maximum number of reticulation nodes contained in any biconnected
component of the network. The other optimality criterion is to minimize the number of reticulation nodes \cite{hussonbook}.
In \cite{galled7} and \cite{Levelone} the authors considered the
problem of deciding whether, given a set of triplets as input, is
it possible to construct a level-1 phylogenetic network that contains all the input triplets? They showed
that, in general, this problem is NP-hard. However,
in \cite{Levelone} the authors showed that when the set of triplets is dense, which means that
for each set of three taxa there is at least one triplet in the input set, 
the problem can be solved in polynomial time.
After their results, all research in this new
area has up to this point focused on constructing networks from
dense triplet sets.
The algorithm by \cite{lev2optimum} can be used to
find a level-1 or a level-2 phylogenetic network which minimizes the number of
reticulation nodes if such a network exists. In \cite{levelk} the authors showed that given a dense set of triplets
$\tau$ and a fixed number $k$, it is possible to construct in time $O(|\tau|^{k+1})$ a level-$k$ phylogenetic network
consistent with $\tau$ or decides that no such network exists.

In this paper we present a heuristic algorithm called {\it TripNet} for
constructing phylogenetic networks from an arbitrary set of triplets.
Despite of current methods that work for dense set of triplets, 
a key innovation is the applicability of TripNet to non-dense set of triplets.
The results of the TripNet algorithm on biological sequences is presented in \cite{tripnetpaper}.
Here we prove some theorems to justify the performance of the algorithm. 
 This paper is organized as follows. In section II we present some
 definitions and notation. In section III 
  we discuss triplet construction methods.
 In section IV the  
 directed graph $G_{\tau}$ related to a set of triplets $\tau$ is introduced.
 In section V the concept of the height function of a tree is introduced,
and we propose an algorithm to construct a tree from its
height function.
Then we generalize the concept of the height function to the networks.
Finally, in section VI we present the TripNet algorithm.

\section{Definitions and Notation}

Let $X$ be a set of taxa. A {\it rooted phylogenetic tree} ({\it tree} for short) on $X$ is a rooted 
unordered leaf labeled tree whose leaves are distinctly labeled
by $X$ and every node which is not a leaf has at least outdegree two. 
A {\it directed acyclic graph} (DAG) is a directed graph that is free of directed cycles.
A directed acyclic graph $G$ is {\it connected} if there is an undirected path between any two
nodes of $G$. It is {\it biconnected} if it contains no node whose removal disconnects 
$G$. A biconnected component of a graph $G$ is a maximal biconnected subgraph of $G$.
A {\it rooted phylogenetic network} ({\it network} for short) on $X$ is a rooted directed acyclic graph in which
{\it root} has indegree 0 and outdegree 2 and
every node except the root satisfies one of the following
conditions:
\\ a) It has indegree 2 and outdegree 1. These
nodes are called {\it reticulation nodes}.
\\ b) It has indegree 1
and outdegree 2.
\\ c) It has indegree 1 and outdegree 0. These nodes are called {\it leaves} and are distinctly labeled by $X$.

A {\it reticulation leaf} is a leaf whose parent is a reticulation node.
A network is said to be a {\it level-$k$ network } if each of its biconnected
component contains at most $k$ reticulation nodes.
A tree can be considered as a level-0 network.

A {\it rooted triplet} ({\it triplet} for short) is a binary rooted unordered tree with three leaves. We
use $ij|k$ to denote the triplet with taxa $i$ and $j$ on one side
and $k$ on the other side of the root (Fig.\ref{fig1}). 
A set of triplets $\tau$ is called {\it dense} if for each subset of three taxa,
there is at least one triplet in $\tau$.
A triplet $ij|k$ is {\it consistent} with a 
network $N$ or equivalently $N$ is consistent with $ij|k$ if $N$
contains a subdivision of $ij|k$, i.e. if $N$ contains distinct
nodes $u$ and $v$ and pairwise internally node-disjoint paths
$u \rightarrow i$, $u \rightarrow j$, $v \rightarrow u$ and 
$v\rightarrow k$. Fig. \ref{fig2}  shows an example of a
network which is consistent with $ij|k$. A set
$\tau$ of triplets is consistent with a network $N$
if all the triplets in $\tau$ are consistent with $N$.
We use the symbols $\tau(N)$ and $L_N$ to represent the set of all triplets that are consistent
with $N$ and the set of labels of its leaves respectively. For any set $\tau$ of triplets define $L(\tau)=  \cup_{t \in \tau} {L_t}$.
The set $\tau$ is called a set of triplets on $X$ if $L(\tau)=X$.

\begin{figure}[h]
\centering \subfigure[]{
\includegraphics[scale=0.4]{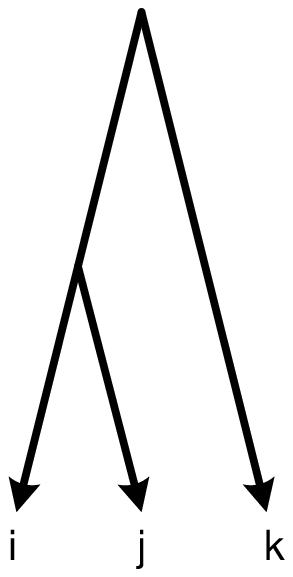}
\label{fig1} } \subfigure[$N$]{
\includegraphics[scale=.5]{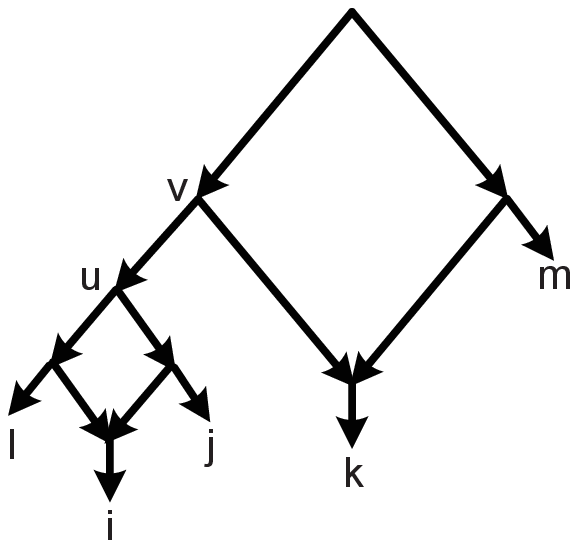}
\label{fig2} } \subfigure[]{
\includegraphics[scale=0.5]{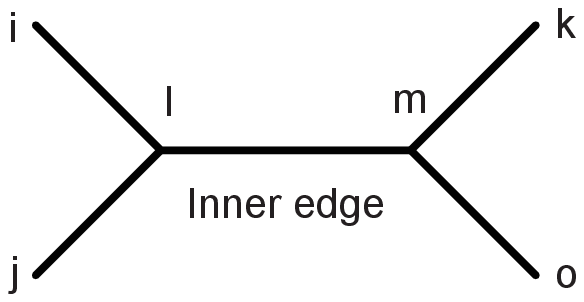}
\label{quartet} } \subfigure[]{
\includegraphics[scale=0.5]{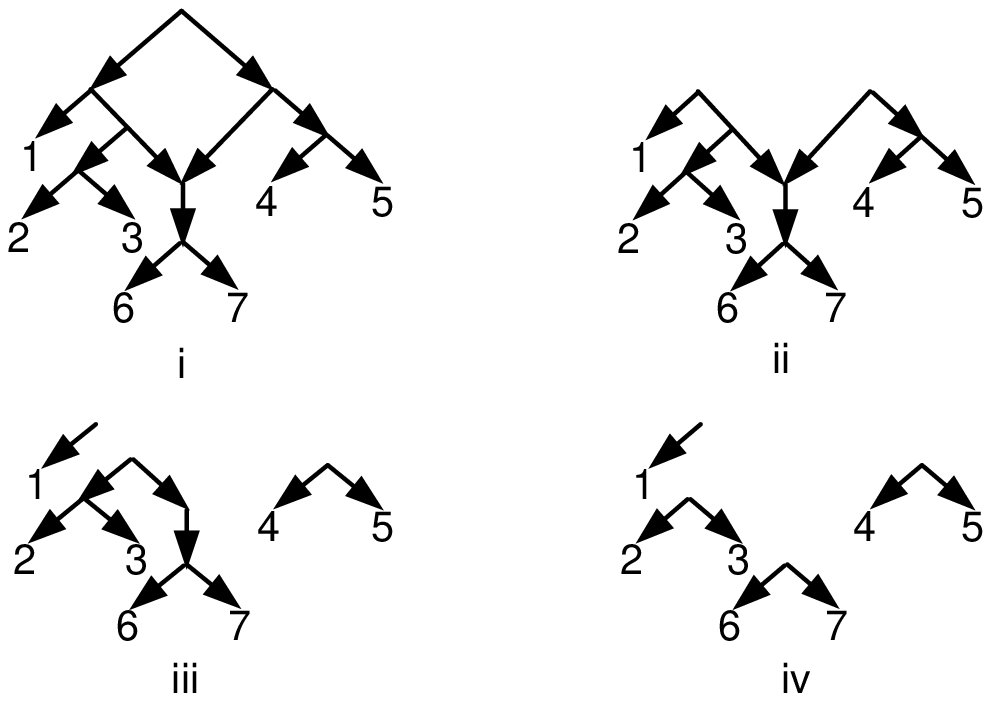}
\label{snfindexamp} } 
\subfigure[]{
\includegraphics[scale=0.5]{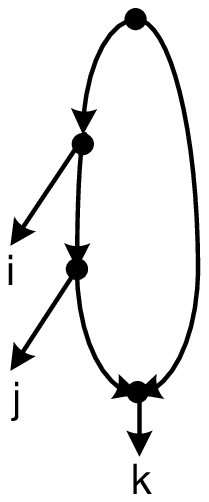}
\label{counterexam} } 
 \label{fig1v2} \caption{ (a) Triplet $ij|k$.
(b) Triplet $ij|k$ is consistent with the network
$N$. (c) The steps of removing edges with maximum weight
from a network. (d) quartet $ij|ko$ with its inner edge. 
(e) A counter example for the reverse of Theorem \ref{thorm3}. }
\end{figure}

\section{Triplets construction method}

There are two main tree construction methods,
character-based methods and distance-based methods.
In character-based methods, the information of a set $X$ of biological
sequences is directly used for producing final tree.
In distance-based methods first a distance matrix $D$ is
computed from $X$ and then a rooted (or unrooted) tree $T$ is constructed
from $D$ \cite{hussonbook}.

A weighted tree $(T, w)$ is a rooted (or unrooted) tree $T$ together with
a function $w : E(T ) \rightarrow \mathbb{R}$. We call $w(e)$ the 
weight of the edge $e$.
For any two nodes $i$ and  $j$ of $T$,
let $l_{ij}$ denotes the unique path in $T$ from $i$ to $j$. Define
\begin{eqnarray*}
d_{T}(i, j)= \sum\limits_{e \, \in \, l_{ij} }  {w(e)}.
\end{eqnarray*}

If $T$ is an unweighted tree then we suppose that for each edge $e$ in $T$, $w(e)=1$.
Given a set of taxa $X$, let $(T, w)$ be a weighted tree on $X$ and
$D_{T}$ be a matrix in which the entry of row $i$ and column $j$ is $d_{T}(i, j)$.
We call $D_{T}$ the distance matrix related to $(T, w)$.

A {\it quartet} is a binary
unrooted tree with four leaves. We use the symbol
$ij|kl$ for a quartet on the set of taxa $\{i, j, k, l\}$ which
have neighbor pairs $i, j$ and $k, l$. In a quartet $Q$ there is a
unique edge such that its two end points are not leaves. 
We call this edge the {\it inner edge} of $Q$ $($ Fig. \ref{quartet} $)$. 
A weighted quartet is called $informative$, if the weight of its inner edge is positive. 
The following proposition holds for informative quartets.

\,\,\,\,\,\,\,
\begin{proposition}
\label{propos1} Given a set of four taxa $X=\{i, j, k, l \}$ and a distance matrix $D$ on $X$. 
 For an informative quartet $ij|kl$, the equation
$d(i, j)+d(k, l) < d(i, k) + d(j, l) = d(j, k) + d(i, l)$ holds.
\end{proposition}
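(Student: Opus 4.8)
The plan is to interpret the distance matrix $D$ as the distance matrix $D_Q$ arising from the weighted quartet $(Q,w)$ whose topology is $ij|kl$, and then to reduce the claimed equation and inequality to a single elementary computation in terms of the edge weights. The key structural observation, which I would isolate first, is that the inner edge of $Q$ lies on the path joining one of $\{i,j\}$ to one of $\{k,l\}$, but lies on neither the path from $i$ to $j$ nor the path from $k$ to $l$. This is precisely the fact that separates the three pairwise sums from one another.

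Concretely, I would introduce the two internal nodes of $Q$, say $u$ (adjacent to the neighbor pair $i,j$) and $v$ (adjacent to the neighbor pair $k,l$), and name the five edge weights: let $a = w(iu)$, $b = w(ju)$, $c = w(kv)$, $f = w(lv)$, and let $m = w(uv)$ be the weight of the inner edge, which is positive by the informativeness hypothesis. Summing weights along the unique paths according to the definition of $d_T$ then gives $d(i,j) = a+b$ and $d(k,l) = c+f$, whereas each of the four cross distances picks up the inner edge: $d(i,k) = a+m+c$, $d(j,l) = b+m+f$, $d(j,k) = b+m+c$, and $d(i,l) = a+m+f$.

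From here the conclusion is immediate arithmetic. I would compute $d(i,j)+d(k,l) = a+b+c+f$ and $d(i,k)+d(j,l) = d(j,k)+d(i,l) = a+b+c+f+2m$. The two cross sums are therefore equal, which yields the claimed equality, and each exceeds $d(i,j)+d(k,l)$ by exactly $2m$. Since $m>0$, the strict inequality $d(i,j)+d(k,l) < d(i,k)+d(j,l) = d(j,k)+d(i,l)$ follows at once.

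There is no genuine obstacle in the argument itself; it is a direct verification once the quartet is coordinatized by its edge weights. The only point requiring care is the reading of the statement: the proposition implicitly assumes that $D$ is realized by the weighted quartet with topology $ij|kl$ (that is, $D = D_Q$), since for an arbitrary matrix $D$ the four-point relation need not hold. I would state this interpretation explicitly at the outset so that the subsequent path-sum computation is justified.
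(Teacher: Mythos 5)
Your proof is correct. Note that the paper itself offers no proof of this proposition: it is stated as a known fact (it is the classical four-point condition for tree metrics), so your path-sum computation supplies exactly the verification the paper omits, and it is the standard argument one would expect. Your interpretive remark at the end is also the right reading: in the paper's context $D$ is realized by the unique quartet $Q$ with $D_Q = D$ (guaranteed by Proposition \ref{propos2} and used in the QOT construction), and the strict inequality uses only the positivity of the inner-edge weight $m$, which is precisely the informativeness hypothesis.
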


\,\,\,\,\,\,\,

The following proposition also holds \cite{fourqunique}.

\,\,\,\,\,\,\

\begin{proposition}
\label{propos2} Given a set of four taxa $X$ and a distance matrix $D$ on $X$. 
There is a unique quartet $Q$ on $X$ for which $D_{Q}=D$.
\end{proposition}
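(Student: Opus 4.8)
The plan is to show that the entries of $D$ force both the topology of $Q$ and all of its edge weights, so that at most one quartet can satisfy $D_Q=D$; the same formulas that recover the weights also produce a realizing quartet whenever $D$ comes from one, giving existence.

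First I would fix notation. Writing $X=\{i,j,k,l\}$ and supposing the topology is $ij|kl$, the quartet has two internal nodes with pendant edges to $i,j$ on one side and to $k,l$ on the other, joined by the inner edge. Denote by $a,b,c,d$ the weights of the pendant edges at $i,j,k,l$ and by $m$ the weight of the inner edge. Summing weights along the unique paths gives the six entries of $D_Q$, for instance $d(i,j)=a+b$, $d(k,l)=c+d$, $d(i,k)=a+m+c$, $d(i,l)=a+m+d$, $d(j,k)=b+m+c$, $d(j,l)=b+m+d$.

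Next I would recover the topology from $D$ alone. Consider the three pairing sums $S_1=d(i,j)+d(k,l)$, $S_2=d(i,k)+d(j,l)$ and $S_3=d(i,l)+d(j,k)$. A direct computation with the formulas above gives $S_2=S_3=S_1+2m$ for topology $ij|kl$, so by Proposition \ref{propos1} the correct topology is exactly the one whose pairing sum is strictly smallest. Since $D$ determines $S_1,S_2,S_3$, it determines which pairing is smallest, and hence the topology is unique. Having fixed the topology, I would solve the linear system for the five weights: the inner edge is $m=\tfrac12(S_2-S_1)$, and each pendant weight comes from a three-point formula such as $a=\tfrac12\big(d(i,j)+d(i,k)-d(j,k)\big)$, with the analogous expressions for $b,c,d$. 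Each weight is thus written explicitly in terms of entries of $D$, so the weighting is uniquely determined; combining this with the uniqueness of the topology yields the unique $Q$.

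The step I expect to be the crux is the determination of the topology, since this is where Proposition \ref{propos1} is genuinely needed: only the strict inequality for an informative quartet guarantees a unique smallest pairing sum. If the inner edge had zero weight then $S_1,S_2,S_3$ would all coincide and all three topologies would realize the same $D$, so uniqueness would fail; the informativeness assumption is precisely what excludes this degenerate case. By contrast, the recovery of the edge weights once the topology is fixed is only the routine inversion of a linear system.
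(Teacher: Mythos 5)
Your argument is correct under the reading you adopt, but note first that the paper itself contains no proof of this proposition: it is imported wholesale from the literature (the citation \cite{fourqunique}, Bandelt and Steel), so any proof you give is automatically a different route. Yours is the natural elementary one for the quartet case: from $S_2=S_3=S_1+2m$ for topology $ij|kl$, Proposition \ref{propos1} forces the split of any informative realizing quartet to be the strictly smallest pairing sum, so two informative quartets with different splits cannot realize the same $D$; the weights are then pinned down by the three-point formulas $a=\tfrac12\bigl(d(i,j)+d(i,k)-d(j,k)\bigr)$ and $m=\tfrac12(S_2-S_1)$, which you verify correctly. What your approach buys is exactly what the paper's QOT method needs: an explicit, constructive recovery of both the topology and the weights from $D$, with the role of informativeness made visible. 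What the citation buys instead is generality, since Bandelt--Steel prove uniqueness of representations for arbitrary weighted trees (over a cancellative abelian monoid), of which the quartet is the smallest nontrivial instance. Two caveats deserve to be stated more loudly than you state them. First, the proposition as literally written asserts existence for \emph{every} distance matrix $D$, which is false: realizability by the topology $ij|kl$ forces $d(i,k)+d(j,l)=d(i,l)+d(j,k)$, and a generic $D$ has three pairwise distinct pairing sums, so no quartet realizes it; your conditional reading (``at most one, and exactly one when $D$ comes from a quartet'') is the only reading under which the statement is true, and it is also the only property the paper actually uses. Second, your uniqueness holds among \emph{informative} quartets, and the $m=0$ degeneracy you mention is genuine (all three topologies then realize the same $D$); this is consistent with the paper, which discards non-informative quartets in the QOT construction, but it means the proposition needs that hypothesis read into it.
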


\,\,\,\,\,\,\,

Suppose that $X$ is a set of taxa in which each taxon is a biological sequence. Let $o_{X}$ 
be an outgroup for $X$ and $D$ be a distance matrix on $X \cup \{o_{X}\}$.
In this paper to obtain a set of triplets, we use the method
introduced in \cite{lev2optimum}. For each set of four taxa that contains
 $o_{X}$, we construct the unique quartet which satisfies
Proposition \ref{propos2}. Then, by removing $o_{X}$ from informative quartets we obtain the set of triplets.
In the rest of paper we refer to this method By QOT.

\section{The directed graph related to a set of triplets}

Let $\tau$ be a set
of triplets. Define $G_{\tau}$, the directed graph related to $\tau$,
 by $V(G_{\tau})= \{ ~ \{i, j\} : i,j \in L(\tau), i \neq j \} $ (we denote $\{i, j\}$ by $ij$
for short) and $E(G_{\tau})= \{ (ij,ik)  :  ij|k \in \tau \}  \cup \{ (ij,jk)  :  ij|k \in \tau \}$. 
The graph $G_{\tau}$ has an important role in the remaining of the paper and in this section we prove some basic
properties of $G_{\tau}$.

Let $X$ be a set of sequences, $D=[d(i, j)]$ be a distance matrix on $X$
where for any pair $i, j \in X$,  $d(i, j)$ denotes the distance between them, and $\tau$ be the set of
triplets that is produced by QOT
method. 
Here we define the concept of the closure of $\tau$.
If $i j|k$ and $j s|i$ are in $\tau$, then we have the quartets $i j|k o_{X}$
and $j s|i o_{X}$. According to the Proposition \ref{propos1},
$d(i, j) + d(k, o_{X}) < d(j, k) + d(i, o_{X})$,
$d(j, s) + d(i, o_{X}) < d(i, j) + d(s, o_{X})$, and therefore
$d(j, s) + d(k, o_{X}) < d(j, k) + d(s, o_{X})$.
It means that we should have the quartet $j s|k o_{X}$.
The equivalent triplet for this informative quartet  is $j s|k$.
If this triplet is not in $\tau$ add it to $\tau$ and
 continue this procedure until one cannot add more
triplets. We use the symbol $\overline{\tau}$ to show this new set of
triplets and call it the closure of $\tau$.

The following lemma is an immediate consequence of the definition of $\overline{\tau}$.

\,\,\,\,\,\,\,

\begin{lemma}
\label{taubarunique} Let $X$ be a set of sequences and $\tau$ be the set of
triplets which is produced by the QOT
method. Then $\overline{\tau}$ contains at most one triplet for each
$\{i, j ,k\} \subseteq X$.
\end{lemma}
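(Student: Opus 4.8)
The plan is to show that membership of a triplet in $\overline{\tau}$ forces the distance data to single out exactly that one resolution, so that two resolutions of the same triple cannot coexist. For a triple $\{a,b,c\}$ I would compare the three sums $d(a,b)+d(c,o_X)$, $d(a,c)+d(b,o_X)$, $d(b,c)+d(a,o_X)$ attached to its three possible triplets; equivalently, setting the Gromov product $g(x,y)=\tfrac12\bigl(d(x,o_X)+d(y,o_X)-d(x,y)\bigr)$, these become comparisons of $g(a,b),g(a,c),g(b,c)$, and the triplet $ab|c$ corresponds to $g(a,b)$ being the strict maximum of the three (small sum $\leftrightarrow$ large $g$). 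I would then prove, by induction on the closure process, the invariant that every $ab|c\in\overline{\tau}$ satisfies $g(a,b)>g(a,c)$ and $g(a,b)>g(b,c)$. Granting the invariant the lemma is immediate, since two distinct resolutions of $\{i,j,k\}$ would require two different pairs among $\{i,j\},\{i,k\},\{j,k\}$ to each carry the strict maximum of the same three numbers, which is impossible.

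For the base case, the triplets of $\tau$ arise by deleting $o_X$ from informative quartets, so Proposition \ref{propos1} applied to the quartet on $\{a,b,c,o_X\}$ gives $g(a,b)>g(a,c)=g(b,c)$, and the invariant holds. For the inductive step I must check the single rule generating the closure: from $ij|k$ and $js|i$ it adds $js|k$. The hypotheses read $g(i,j)>g(i,k)=g(j,k)$ and $g(j,s)>g(i,j)=g(i,s)$. The first half of what I need, $g(j,s)>g(j,k)$, then drops out of the chain $g(j,s)>g(i,j)>g(j,k)$; this is exactly the inequality $d(j,s)+d(k,o_X)<d(j,k)+d(s,o_X)$ already derived in the discussion preceding the lemma.

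The main obstacle is the second half, $g(j,s)>g(s,k)$, because the quantity $g(s,k)$ (equivalently the distance $d(s,k)$) appears in neither hypothesis and so cannot be produced by merely adding the two given inequalities. Here I would invoke the full force of Propositions \ref{propos1} and \ref{propos2}: uniqueness of the fitting quartet forces every four-element subset to satisfy the four-point condition, i.e.\ for every triple the two smaller of its three $g$-values are equal. Applying this to $\{i,s,k\}$ and using $g(i,s)=g(i,j)>g(i,k)$, so that $g(i,s)$ cannot be among the two equal smaller values, pins down $g(s,k)=g(i,k)$. Combining with $g(i,k)=g(j,k)$ and $g(j,s)>g(i,j)>g(j,k)$ gives $g(j,s)>g(s,k)$ together with $g(s,k)=g(j,k)$, so $g(j,s)$ is the strict maximum for $\{j,s,k\}$ and $js|k$ satisfies the invariant. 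This closes the induction and, with the mutual-exclusivity observation of the first paragraph, proves the lemma.
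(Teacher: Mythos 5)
Your proof is correct, but it is doing genuinely more work than the paper, which offers no argument at all: the lemma is simply declared ``an immediate consequence of the definition of $\overline{\tau}$.'' The intended reasoning behind that remark is presumably exactly what you make rigorous -- every triplet of $\overline{\tau}$, whether produced by QOT or added during closure, corresponds to an informative quartet with $o_{X}$ that fits the distance data, and by Proposition~\ref{propos2} a quadruple fits at most one quartet, so a triple carries at most one triplet. Your write-up exposes why this is \emph{not} immediate: the closure rule's chained inequality yields only one of the two comparisons the new triplet $js|k$ must satisfy (namely $d(j,s)+d(k,o_{X})<d(j,k)+d(s,o_{X})$, the inequality the paper derives before the lemma), while the comparison against the pair $\{s,k\}$ involves a distance that appears in neither hypothesis and cannot be produced by adding inequalities. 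You close this gap by invoking the four-point condition on every quadruple containing $o_{X}$ -- which is implicit in the paper's assumption that QOT can apply Proposition~\ref{propos2} to every such quadruple -- to pin down $g(s,k)=g(i,k)$, and the induction then goes through; the Gromov-product reformulation is equivalent to the paper's sum comparisons and just cleans up the bookkeeping. One wording correction: it is the \emph{existence} of a quartet exactly fitting the restricted distance matrix (presupposed by QOT), not its \emph{uniqueness}, that forces the four-point condition on a quadruple; uniqueness is what powers your final mutual-exclusivity step, where two resolutions of the same triple would each have to be the strict maximum of the same three quantities.
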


\,\,\,\,\,\,\,

Now we state the main results of this section.

\,\,\,\,\,\,\,

\begin{theorem}
\label{thorm4} Let $X$ be a set of sequences and $\tau$ be the set of
triplets which is produced by the QOT method. 
Then $G_{\tau}$ is a DAG.
\end{theorem}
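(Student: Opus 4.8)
The plan is to exhibit a real-valued potential function on the vertices of $G_{\tau}$ that strictly decreases along every directed edge. Since traversing a directed cycle must return the function to its starting value, strict monotonicity rules out the existence of any directed cycle, which gives that $G_{\tau}$ is a DAG.

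First I would recall that, by construction, every triplet $ij|k \in \tau$ is obtained via the QOT method from an informative quartet $ij|k\,o_{X}$. Proposition~\ref{propos1} then supplies the two strict inequalities
\[
d(i,j) + d(k, o_{X}) < d(i,k) + d(j, o_{X}), \qquad d(i,j) + d(k, o_{X}) < d(j,k) + d(i, o_{X}),
\]
and these are exactly what I will need: one inequality for each of the two edges $(ij, ik)$ and $(ij, jk)$ that the triplet $ij|k$ contributes to $E(G_{\tau})$.

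Next I would define, for each vertex $\{a,b\} \in V(G_{\tau})$, the quantity
\[
f(\{a,b\}) = d(a, o_{X}) + d(b, o_{X}) - d(a,b),
\]
which is well defined because $L(\tau) \subseteq X$ and $D$ is given on $X \cup \{o_{X}\}$. A direct computation shows that along the edge $(ij, ik)$ the increment $f(\{i,k\}) - f(\{i,j\})$ reduces to $[d(i,j) + d(k, o_{X})] - [d(i,k) + d(j, o_{X})]$, which is negative by the first inequality above; likewise, along $(ij, jk)$ the increment equals $[d(i,j) + d(k, o_{X})] - [d(j,k) + d(i, o_{X})]$, negative by the second inequality. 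Hence $f$ strictly decreases across every edge of $G_{\tau}$.

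Finally I would argue by contradiction: if $G_{\tau}$ contained a directed cycle $v_{0} \to v_{1} \to \cdots \to v_{m} = v_{0}$, strict monotonicity would yield $f(v_{0}) > f(v_{1}) > \cdots > f(v_{m}) = f(v_{0})$, which is impossible. Therefore $G_{\tau}$ has no directed cycle and is a DAG. I expect the only delicate point to be the bookkeeping that correctly pairs each of the two edge types produced by a triplet with the corresponding inequality of Proposition~\ref{propos1}; once the potential function $f$ is written down, everything else is a routine sign check.
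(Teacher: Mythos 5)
Your proof is correct, and it takes a genuinely different and considerably more economical route than the paper. The paper proves a stronger statement (that the closure graph $G_{\overline{\tau}}$ is acyclic) by induction on the length of a shortest cycle, with a case analysis on how consecutive vertices of the cycle can intersect, and it invokes the closure operation plus Lemma~\ref{taubarunique} to handle short cycles before deriving a contradiction by summing the Proposition~\ref{propos1} inequalities around a cycle of a specific combinatorial form. Your potential function $f(\{a,b\}) = d(a,o_X) + d(b,o_X) - d(a,b)$ is precisely the quantity whose telescoping makes that final summation work, but by isolating it you bypass the induction, the closure $\overline{\tau}$, and all of the structural analysis of cycles: every edge of $G_{\tau}$, whether of type $(ij,ik)$ or $(ij,jk)$, strictly decreases $f$ by one of the two inequalities in Proposition~\ref{propos1} (both hold for each triplet, since the QOT method only retains triplets coming from informative quartets $ij|k\,o_X$), and acyclicity follows at once. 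What the paper's longer argument buys is the explicit statement about $G_{\overline{\tau}}$, whose extra triplets are only guaranteed one of the two inequalities by the closure construction; your argument as written covers $G_{\tau}$ itself, which is exactly what the theorem claims. One small point worth making explicit if you write this up: $f$ is well defined on vertices of $G_{\tau}$ because it is symmetric in $a$ and $b$ and because $L(\tau) \subseteq X$, so all the needed distances exist in the matrix $D$ on $X \cup \{o_X\}$.
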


\begin{proof}
We prove a stronger result and show
that $G_{\overline{\tau}}$ is a DAG.
The proof proceeds by induction on the length of the shortest cycle in $G_{\overline{\tau}}$.
First we prove that 
$G_{\overline{\tau}}$ dose not contain any cycle
of length 3. Assume that $C$ is a cycle of length 3 in $G_{\overline{\tau}}$.
 Let $(ij, ik)$ be an edge of $C$. The triplet which corresponds to
this edge is $ij|k$. Suppose that the third node of the cycle is
$st$. Thus the other edges of the cycle are 
$(st, ij)$ and $(ik, st)$. So $|\{i, j\} \cap \{s, t\}|=|\{i, k\} \cap \{s, t\}|=1$.  
There are two cases. 
Case 1: $s=i$ and
$t \neq j, k$ (or $t=i$ and $s \neq j, k$). 
Case 2: $s=j$ and $t=k$ (or $s=k$ and $t=j$). For
the first case, the edges of $C$ are $(ij, ik)$,
$(ik, it)$ and $(it, ij)$. The three quartets which are
corresponds to the triplets of these three edges are $ij|ko_{X}$, $ik|to_{X}$ and $it|jo_{X}$. According to
Proposition \ref{propos1}, we have three inequalities
$d(i, j) + d(k, o_{X}) < d(i, k) + d(j, o_{X})$,
$d(i, k) + d(t, o_{X}) < d(i, t) + d(k, o_{X})$ and
$d(i, t) + d(j, o_{X}) < d(i, j) + d(t, o_{X})$. By summing up these 
inequalities, we obtain a contradiction. For the
second case 
the edges of $C$ are $(ij, ik)$, $(ik, jk)$ and
$(jk, ij)$. The three triplets corresponds to these three edges are  
$i j|k, i k|j$ and $j k|i$ which
contradicts Lemma \ref{taubarunique}.
So there is no cycle of length 3 in $G_{\overline{\tau}}$. 
Now assume that there is no cycle of length $k \geq 3$ in $G_{\overline{\tau}}$
and $C$ be a cycle of length $k+1$ in it.
First we claim that there
is no path $s_{1}s_{2}\rightarrow s_{3}s_{4} \rightarrow s_{5}s_{6} $ in $C$
such that $|\{s_{1},s_{2}\} \cap \{s_{3},s_{4}\} \cap \{s_{5},s_{6}\}|=1$.
Suppose that there exists such a path. So this path is of
the form $js \rightarrow ji \rightarrow jk$ and triplets $js|i$
and $ij|k$ are in $\overline{\tau}$. The method of constructing
$\overline{\tau}$ implies that $js|k$ is in $\overline{\tau}$ and
the edge $js \rightarrow jk$ is in $G_{\overline{\tau}}$. 
So we obtain a cycle of length $k$ in
$G_{\overline{\tau}}$, a contradiction. 

Let $s_{1}s_{2}$ be a node of $C$.
There exists a node $s_{3}$ such that the edge
$s_{1}s_{2}\rightarrow s_{1}s_{3}$ is in $C$ and
$s_{1}s_{3}$
is connected to a node $s_{1}s_{4}$ or $s_{3}s_{4}$.
If $s_{1}s_{4} \in V(C)$ then
$s_{1}s_{2}\rightarrow s_{1}s_{3} \rightarrow s_{1}s_{4}$  
will be in $C$ which contradicts the above claim.
So the cycle $C$ is of the form 
$s_{1}s_{2}
\rightarrow s_{1}s_{3} \rightarrow s_{3}s_{4} \rightarrow
 \ldots \rightarrow
 s_{k}s_{k+1} \rightarrow s_{k+1}s_{k+2}
\rightarrow s_{1}s_{2}$. 
For the edges $s_{k+1}s_{k+2} \rightarrow s_{1}s_{2}$ we obtain $|\{s_{1},s_{2}\} \cap \{s_{k+1},s_{k+2}\}|=1$. For the cases 
$s_{k+1}=s_{l}, l \in \{1, 2\} $ or $s_{k+2}=s_{1}$ we have a
cycle of length $k$ in $G_{\overline{\tau}}$. So
$s_{k+2}=s_{2}$ and triplets $s_{1}s_{2}|s_{3}$,
$s_{1}s_{3}|s_{4}$, $s_{3}s_{4}|s_{5}$, 
$\ldots$, $s_{k-1}s_{k}|s_{k+1}$, $s_{k}s_{k+1}|s_{2}$ and
$s_{k+1}s_{2}|s_{1}$ are in
$\overline{\tau}$. Equivalently, we have the following inequalities: \\
$d(s_{1},s_{2})+d(s_{3},o)<d(s_{1},s_{3})+d(s_{2},o)$, \\
$d(s_{1},s_{3})+d(s_{4},o)<d(s_{3},s_{4})+d(s_{1},o)$, \\
\vdots \\
$d(s_{k},s_{k+1})+d(s_{1},o)<d(s_{k+1},s_{1})+d(s_{k},o)$, \\
$d(s_{k+1},s_{1})+d(s_{2},o)<d(s_{1},s_{2})+d(s_{k+1},o)$. \\
Summing these inequalities, we obtain a
contradiction. So there is no
cycle of length $k+1$ in $G_{\overline{\tau}}$.
\end{proof}

\,\,\,\,\,\,

Let $\tau$ be a set of triplets that is consistent with a tree. 
Let $T_{\tau}$ denotes the unique tree that is produced by the BUILD algorithm.

\,\,\,\,\,\,\,

\begin{theorem}
\label{thormgtaudag} Let $\tau$ be a set of triplets that is consistent with a tree.
Then $G_{\tau}$ is a DAG.
\end{theorem}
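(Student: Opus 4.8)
The plan is to exhibit a real-valued potential on the vertices of $G_{\tau}$ that strictly decreases along every directed edge; the mere existence of such a potential forbids directed cycles and therefore shows $G_{\tau}$ is a DAG. The natural potential comes from the tree witnessing consistency. Fix a tree $T$ with which $\tau$ is consistent (for instance $T=T_{\tau}$, the tree returned by the BUILD algorithm). For distinct leaves $i,j$ let $\mathrm{lca}_T(i,j)$ denote their lowest common ancestor in $T$, and define $\phi(\{i,j\})$ to be the depth of $\mathrm{lca}_T(i,j)$, i.e. the number of edges on the path from the root of $T$ down to $\mathrm{lca}_T(i,j)$.

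The key step is to read off, from the definition of consistency, what a triplet $ij|k\in\tau$ says about these lowest common ancestors in $T$. Since $ij|k$ is consistent with $T$, there exist distinct nodes $u$ and $v$ together with internally node-disjoint paths $u\rightarrow i$, $u\rightarrow j$, $v\rightarrow u$, and $v\rightarrow k$. In a tree this configuration forces $u=\mathrm{lca}_T(i,j)$ and $v$ to be a strict ancestor of $u$ with $v=\mathrm{lca}_T(i,k)=\mathrm{lca}_T(j,k)$. Hence $\mathrm{lca}_T(i,j)$ is a strict descendant of both $\mathrm{lca}_T(i,k)$ and $\mathrm{lca}_T(j,k)$, which gives $\phi(\{i,j\})>\phi(\{i,k\})$ and $\phi(\{i,j\})>\phi(\{j,k\})$. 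Now I would translate this into a statement about edges: by definition the triplet $ij|k$ contributes exactly the two edges $(ij,ik)$ and $(ij,jk)$, both directed out of the vertex $ij$, and the two inequalities say precisely that $\phi$ strictly decreases along each of them. Since every edge of $G_{\tau}$ arises from some triplet of $\tau$ in this way, $\phi$ strictly decreases along every directed edge of $G_{\tau}$.

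Finally, suppose toward a contradiction that $G_{\tau}$ contained a directed cycle $v_{1}\rightarrow v_{2}\rightarrow\cdots\rightarrow v_{m}\rightarrow v_{1}$. Applying the monotonicity of $\phi$ around the cycle yields $\phi(v_{1})>\phi(v_{2})>\cdots>\phi(v_{m})>\phi(v_{1})$, which is absurd. Therefore $G_{\tau}$ has no directed cycle and is a DAG.

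I expect the only delicate point to be the middle step: justifying rigorously that in a tree the subdivision condition forces $u=\mathrm{lca}_T(i,j)$ together with $v$ a \emph{strict} ancestor of $u$, so that the inequality $\phi(\{i,j\})>\phi(\{i,k\})$ is strict rather than merely weak. This hinges on $u$ and $v$ being distinct nodes, which is part of the definition of consistency. Everything else is a routine topological-ordering argument. Note that, unlike Theorem \ref{thorm4}, this proof uses neither the QOT construction nor density; it applies to any tree-consistent $\tau$, which makes it a clean companion result.
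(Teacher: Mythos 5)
Your proof is correct, but it takes a genuinely different route from the paper's. The paper proves Theorem \ref{thormgtaudag} by induction on $|L(\tau)|$: it removes the root of $T_{\tau}$ to obtain subtrees $T_1,\dots,T_m$ and restricted triplet sets $\tau_i$, applies the induction hypothesis to each $G_{\tau_i}$, and then observes that every edge arising from a triplet in $\tau\setminus\bigcup_i \tau_i$ points into a node of $V(G_{\tau})\setminus V(G')$ having outdegree zero, so no directed cycle can exist. You instead exhibit a potential that strictly decreases along every edge --- the depth of the lowest common ancestor in any tree $T$ consistent with $\tau$ --- and conclude acyclicity by the standard monotonicity-around-a-cycle contradiction. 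Note that your potential is, up to sign, exactly the paper's height function, $\phi(\{i,j\})=l_T-h_T(i,j)$; your key lemma is precisely the easy direction of the paper's Theorem \ref{thormhandtree}; and your final step is the same device the paper itself uses at the end of the proof of Theorem \ref{thormipsmaxforl} (a feasible potential forbids cycles). What your route buys: it is shorter, non-inductive, requires only \emph{some} tree consistent with $\tau$ rather than the BUILD tree $T_{\tau}$ and its properties, and it produces an explicit topological order of $V(G_{\tau})$ (sort by increasing $\phi$, equivalently decreasing $h_T$). What the paper's route buys: its inductive decomposition of $G_{\tau}$ into the $G_{\tau_i}$ plus outdegree-zero nodes is reused verbatim later --- the proof of Theorem \ref{thorm2hequal} explicitly cites this construction --- so the longer argument does double duty. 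Your handling of the one delicate point is sound: in a tree, the subdivision witnessing consistency of $ij|k$ forces $u=\mathrm{lca}_T(i,j)$ (else $\mathrm{lca}_T(i,j)$ would be an internal node of both paths $u\rightarrow i$ and $u\rightarrow j$), and internal node-disjointness of $v\rightarrow k$ from the other paths pins $\mathrm{lca}_T(i,k)=\mathrm{lca}_T(j,k)=v$, a strict ancestor of $u$ because $u\neq v$; this yields the strict inequalities you need.
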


\begin{proof} The proof proceeds by induction on $|L(\tau)|$.
It is trivial when $|L(\tau)|=3$.  Assume that theorem holds when $|L(\tau)| \leq k$.
Let $|L(\tau)|=k+1$ and $T_{1}, T_{2}, \ldots, T_{m}$ be
$m$ subtrees that are obtained from $T_{\tau}$ by removing its root.
For each $i$, $1 \leq i \leq m$, let $\tau_{i}=\tau |_{L_{T_i}}$ denotes
the set of all triplets in $\tau$ whose leaves are in $L_{T_{i}}$. 
By the induction assumption for each $i$, $1 \leq i \leq m$,
 $G_{\tau_{i}}$ is a DAG. Let $\tau'= \bigcup\limits_{1 \leq i \leq m} {\tau_{i}}$ and
$G'= \bigcup\limits_{1 \leq i \leq m} {G_{\tau_{i}}}$. Apparently, $G'$ is a DAG and $G_{\tau'}=G'$.
 The graph $G_{\tau}$ can be obtained from $G'$ by adding the nodes which belong to  $V(G_{\tau}) \setminus V(G')$
 and the edges corresponds to the triplets in $\tau \setminus \tau'$.
 If a triplet $t=ab|c \in \tau \setminus \tau'$ then there is $1 \leq i < j \leq m$ such that $a,b \in L(\tau_i)$ and $c \in L(\tau_j)$.
 It means that the edges corresponds to the triplets in $\tau \setminus \tau'$ are of the form $(ab, ac)$ such that $ac \in V(G_{\tau}) \setminus V(G')$.
 So all nodes in $V(G_{\tau}) \setminus V(G')$ has outdegree zero and the edges in $G_{\tau} \setminus G'$ are from $V(G')$ to  $V(G_{\tau}) \setminus V(G')$.
 Now if there exists a cycle in $G_{\tau}$ it sholud contain a node
 in $V(G_{\tau}) \setminus V(G')$ which contradicts that these nodes have outdegree zero
and the proof is complete.
\end{proof}

\,\,\,\,\,\,\,

\section{Height function}

In this section first the concept of the height
function of a tree and a DAG is introduced and  then the BUILD algorithm is restated based on
this concept.

Let $X \choose 2$ denotes the set of all subsets of $X$ of size $2$.

\begin{definition}
\label{heghttree} Let $X$ be an arbitrary finite set.
A function $h: {X \choose 2} \rightarrow \mathbb{N}$ is called a {\it height function} on $X$.
\end{definition}

Let $T$ be a rooted tree with the root $r$, $c_{ij}$ be the lowest common ancestor of the leaves
$i$ and $j$, and $l_T$ denotes the length of the longest path started from $r$. 

\begin{definition}
\label{heghttree}
The {\it height function} of $T$, $h_{T}$ is defined as $h_{T}(i,j)=l_{T}-d_{T}(c_{ij},r)$ where $i$ and $j$ are two distinc leaves of $T$.
\end{definition}

The following theorem represents the relation between the height function of a tree and a triplet consistent with it.

\,\,\,\,\,\,\,
\begin{theorem}
\label{thormhandtree} Let $T$ be a tree. A triplet $ij|k$ is consistent with $T$
if and only if  $h_{T}(i, j) < h_{T}(i, k)$ or $h_{T}(i, j) < h_{T}(j, k)$.
\end{theorem}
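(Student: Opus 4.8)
The plan is to reduce the statement to a structural fact about lowest common ancestors (LCAs) in $T$. First I would record the key observation that for any three distinct leaves $i,j,k$, if $c_{ijk}$ denotes their common LCA, then each pairwise LCA $c_{ij}, c_{ik}, c_{jk}$ is a (weak) descendant of $c_{ijk}$, and moreover at least two of these three pairwise LCAs coincide with $c_{ijk}$. This follows by examining how $i,j,k$ distribute among the subtrees hanging below $c_{ijk}$: they cannot all lie in one such subtree, or $c_{ijk}$ would not be their LCA, so either they lie in three distinct subtrees, in which case all three pairwise LCAs equal $c_{ijk}$ (the fan case), or exactly two of them, say $i$ and $j$, lie in a common subtree while the third lies elsewhere, in which case $c_{ij}$ is a proper descendant of $c_{ijk}$ while $c_{ik}=c_{jk}=c_{ijk}$.

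Next I would connect consistency to this picture. Using the subdivision definition of consistency, I would show that $ij|k$ is consistent with $T$ exactly when $c_{ij}$ is a proper descendant of $c_{ijk}$: given this, one takes $u=c_{ij}$ and $v=c_{ijk}$ and reads off the internally node-disjoint paths $u\to i$, $u\to j$, $v\to u$, $v\to k$ directly from the tree; conversely, the existence of such $u$ and $v$ forces $i$ and $j$ to share an ancestor $u$ lying strictly below the branch point leading to $k$, which is precisely the resolved configuration $ij|k$.

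Then I would translate everything into heights. Because $h_{T}(i,j)=l_{T}-d_{T}(c_{ij},r)$ with $l_{T}$ fixed, the height is a strictly decreasing function of the depth $d_{T}(c_{ij},r)$; hence \emph{$c_{ij}$ is strictly deeper than $c_{ik}$} is the same as $h_{T}(i,j)<h_{T}(i,k)$, and similarly for the other pairs. Combining this with the structural fact yields exactly three possible height patterns for a resolved triple together with the all-equal pattern of the fan, and in the $ij|k$-resolved case one obtains $h_{T}(i,j)<h_{T}(i,k)=h_{T}(j,k)$. This gives the forward implication immediately, since consistency forces the $ij|k$ pattern and therefore both strict inequalities, hence in particular their disjunction.

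The point that needs care, and where the \emph{or} rather than an \emph{and} in the statement is explained, is the converse. Here the plan is to argue that a single strict inequality, say $h_{T}(i,j)<h_{T}(i,k)$, already forces the $ij|k$ pattern: by the structural fact two of the three heights are equal and attain the maximum, and since $h_{T}(i,j)$ is strictly below $h_{T}(i,k)$ it cannot be one of those maxima, so the two equal maxima must be $h_{T}(i,k)$ and $h_{T}(j,k)$, giving $h_{T}(i,j)<h_{T}(i,k)=h_{T}(j,k)$; the disjunct $h_{T}(i,j)<h_{T}(j,k)$ is handled symmetrically. Thus either half of the hypothesis already pins down the resolved configuration $ij|k$, and by the earlier steps $ij|k$ is consistent with $T$. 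I expect this reconciliation of the disjunctive hypothesis with the structural rigidity of the LCAs to be the main obstacle, the remaining steps being routine unwinding of the definitions.
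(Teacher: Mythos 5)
Your proposal is correct, but it is organized around a different key lemma than the paper's proof. The paper's converse is a two-line argument: from $h_{T}(i,j)<h_{T}(i,k)$ it notes that $c_{ij}$ and $c_{ik}$ are \emph{comparable} (both lie on the unique root-to-$i$ path), so the depth inequality $d_{T}(c_{ij},r)>d_{T}(c_{ik},r)$ makes $c_{ik}$ a proper ancestor of $c_{ij}$, and the subdivision witnessing $ij|k$ is read off directly with $u=c_{ij}$, $v=c_{ik}$; no classification of the three-leaf topology is ever invoked. You instead prove a structural rigidity lemma (the restriction to $\{i,j,k\}$ is either a fan or resolved, and at least two of the pairwise LCAs coincide with the triple LCA $c_{ijk}$), use it to show that a single strict inequality already pins down the full pattern $h_{T}(i,j)<h_{T}(i,k)=h_{T}(j,k)$, and then pass to consistency through the equivalence ``$ij|k$ consistent $\Leftrightarrow$ $c_{ij}$ is a proper descendant of $c_{ijk}$,'' which you prove from the subdivision definition. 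Your route is longer but buys two things the paper glosses over: the forward direction (which the paper dismisses with ``by definition'') is actually justified via the descendant characterization, and the rigidity lemma makes transparent why the disjunctive hypothesis with ``or'' is no weaker than the conjunction with ``and'' --- either inequality forces the other. The paper's route buys brevity: comparability of two LCAs over a common leaf is all that is needed, and in the resolved case your witness $v=c_{ijk}$ and the paper's $v=c_{ik}$ are in fact the same node, so the two constructions coincide at the end.
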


\begin{proof} Let $ij|k$ be consistent with $T$. 
By definition $h_{T}(i, j) < h_{T}(i, k)$ and $h_{T}(i, j) < h_{T}(j, k)$.
Now suppose that for the three arbitrary leaves $i, j$ and $k$, we have $h_{T}(i, j) < h_{T}(i, k)$
or $h_{T}(i, j) < h_{T}(j, k)$. Without loss of generality suppose that $h_{T}(i, j) < h_{T}(i, k)$.
Since $c_{ij}$ and $c_{ik}$ are on a unique path from the root $r$ to $i$ and $d_{T}(c_{ij}, r) > d_{T}(c_{ik}, r)$, thus
there is a path from the lowest common ancestor of $i, k$ 
to the lowest common ancestor of $i, j$ which follows that $ij|k$ is consistent with $T$.
\end{proof}

\,\,\,\,\,\,\,

Let $\tau$ be a set of triplets, $G_{\tau}$ be a DAG and $l_{G_{\tau}}$ denotes the length of the longest path in $G_{\tau}$.
Since $G_{\tau}$ is a DAG, the set of nodes with outdegree zero is nonempty.
Assign $l_{G_{\tau}}+1$ to the nodes with
outdegree zero and remove them from $G_{\tau}$. Assign $l_{G_{\tau}}$ to the nodes with
outdegree zero in the resulting graph and continue this procedure until all nodes are removed. 

\begin{definition}
\label{dagdefine}
For any two distinct
$i, j \in L(\tau)$, define $h_{G_{\tau}}(i,j)$ as the value that is assigned by the above procedure to the
node $ij$ and call it the {\it height function} related to $G_{\tau}$. 
\end{definition}

Let $\tau$ be a set of triplets that is consistent with a tree. By Theorem \ref{thormgtaudag}, $G_\tau$
is a DAG and $h_{G_{\tau}}$ is well-defined.
The following theorem represents a method to obtain $h_{{T}_{\tau}}$ from $\tau$ using $h_{G_{\tau}}$.

\,\,\,\,\,\,\,
\begin{theorem}
\label{thorm2hequal} Let $\tau$ be a set of triplets which is consistent with a tree.
Then $h_{G_{\tau}}=h_{T_{\tau}}$.
\end{theorem}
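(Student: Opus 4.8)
My plan rests on reading the peeling procedure of Definition \ref{dagdefine} as a longest-path computation. If $p(ij)$ denotes the length of the longest directed path of $G_\tau$ starting at the node $ij$, then a node is removed in the round numbered $p(ij)+1$, so the value it receives is $l_{G_\tau}+1-p(ij)$; that is, $h_{G_\tau}(ij)=l_{G_\tau}+1-p(ij)$ (this is well defined because $G_\tau$ is a DAG by Theorem \ref{thormgtaudag}). The goal is therefore to express $p$ through the tree. I will prove
\begin{equation}\label{eq:claim}
p(ij)=l_{T_\tau}-h_{T_\tau}(i,j)\quad\text{for every pair }ij,\qquad\text{and}\qquad l_{G_\tau}=l_{T_\tau}-1 .
\end{equation}
Granting \eqref{eq:claim}, substitution gives $h_{G_\tau}(ij)=l_{T_\tau}-p(ij)=h_{T_\tau}(i,j)$, which is exactly the assertion.

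\emph{Upper bound and the sinks.} Every arc of $G_\tau$ has the form $(ij,ik)$ arising from a triplet $ij|k\in\tau$. Since $\tau$ is consistent with $T_\tau$, this triplet is consistent with $T_\tau$, and Theorem \ref{thormhandtree} gives $h_{T_\tau}(i,j)<h_{T_\tau}(i,k)$. As $h_{T_\tau}$ is integer valued it increases by at least $1$ along each arc, so along a directed path of length $d$ from $ij$ the height rises by at least $d$; because $h_{T_\tau}\le l_{T_\tau}$ everywhere, this forces $d\le l_{T_\tau}-h_{T_\tau}(i,j)$, i.e. $p(ij)\le l_{T_\tau}-h_{T_\tau}(i,j)$. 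The same monotonicity pins down the sinks in one direction unconditionally: if $c_{ij}=r$ then no triplet $ij|k$ can be consistent with $T_\tau$ (its root-ancestor $c_{ij}$ cannot sit strictly below $c_{ik}$), so $ij$ emits no arc. Hence every pair of height $l_{T_\tau}$ is a sink.

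\emph{Lower bound (the realization step).} Let $ij$ satisfy $h_{T_\tau}(i,j)=m<l_{T_\tau}$, so $c_{ij}\ne r$ and $c_{ij}$ has a parent $u$. Being internal, $u$ has a second child whose subtree contains some leaf $k$, and then the lowest common ancestor of $i$ and $k$ is exactly $u$; thus $h_{T_\tau}(i,k)=l_{T_\tau}-d_{T_\tau}(u,r)=m+1$ and the triplet $ij|k$ is consistent with $T_\tau$. \emph{Provided this triplet lies in $\tau$}, the arc $(ij,ik)$ is present in $G_\tau$ and carries $ij$ to a pair of height $m+1$. Iterating this one-unit climb from $ij$ up to a sink at height $l_{T_\tau}$ produces a path of length $l_{T_\tau}-m$, giving $p(ij)\ge l_{T_\tau}-h_{T_\tau}(i,j)$; with the upper bound this yields the first equality of \eqref{eq:claim}. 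Applying it at the minimum height, which equals $1$ (attained by two leaves sharing the parent of a deepest leaf), exhibits a source whose longest path has length $l_{T_\tau}-1$, so $l_{G_\tau}=l_{T_\tau}-1$.

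\emph{The main obstacle.} The single non-automatic ingredient is the italicized proviso: that the witnessing triplet $ij|k$ actually belongs to the input $\tau$, so that the climbing arc exists in $G_\tau$ (equivalently, that the sinks are \emph{only} the root-pairs). When $\tau=\tau(T_\tau)$, in particular for dense $\tau$, this is immediate and the proof is complete; for a thinner $\tau$ an internal pair may emit no arc and become an isolated sink, which is precisely the point any argument must confront. The structural route to discharge it is to mirror the induction of Theorem \ref{thormgtaudag}: delete the root of $T_\tau$ to obtain subtrees $T_1,\dots,T_m$, put $\tau_i=\tau|_{L_{T_i}}$, observe that the cross pairs (those with $c_{ij}=r$) are exactly the global sinks at height $l_{T_\tau}$ and that every arc leaving a block $G_{\tau_i}$ lands on such a sink, then invoke the induction hypothesis $h_{G_{\tau_i}}=h_{T_i}$ inside each block and reconcile the constant per-block shift $l_{T_\tau}-1-l_{T_i}$ with these climbing arcs. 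Verifying that the required cross arcs survive is the crux and is where the hypothesis on $\tau$ must be used in full. Once \eqref{eq:claim} is established, $h_{G_\tau}(ij)=l_{G_\tau}+1-p(ij)=l_{T_\tau}-\bigl(l_{T_\tau}-h_{T_\tau}(i,j)\bigr)=h_{T_\tau}(i,j)$ for every pair, which finishes the proof.
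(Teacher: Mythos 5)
Your reduction of the peeling procedure to longest paths, $h_{G_\tau}(ij)=l_{G_\tau}+1-p(ij)$, and your upper bound $p(ij)\le l_{T_\tau}-h_{T_\tau}(i,j)$ are both correct, but the proposal is not a proof: you explicitly leave the realization step (existence of the climbing arcs, i.e.\ the matching lower bound on $p$) as an unproven proviso, and that step is the entire content of the theorem once the reformulation is in place. The paper's own argument is the induction you sketch in your final paragraph but do not carry out: remove the root of $T_\tau$ to get subtrees $T_i$, set $\tau_i=\tau|_{L_{T_i}}$, use $T_i=T_{\tau_i}$ and the induction hypothesis $h_{G_{\tau_i}}=h_{T_{\tau_i}}$ inside each block, note that cross pairs are sinks of $G_\tau$ with value $l_{T_\tau}$, and shift the block values by $l_{T_\tau}-l_{T_{\tau_i}}-1$.

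The more important point is that the proviso you flag cannot be discharged: the statement is false for sparse $\tau$ without further hypotheses. Take $\tau=\{ab|c,\;bc|d\}$, which is consistent with the caterpillar $(((a,b),c),d)$; BUILD returns exactly this caterpillar, so $l_{T_\tau}=3$ and $h_{T_\tau}(a,c)=2$. The only arcs of $G_\tau$ are $ab\rightarrow ac$, $ab\rightarrow bc$, $bc\rightarrow bd$, $bc\rightarrow cd$, so $l_{G_\tau}=2$, the node $ac$ is a sink, and the peeling assigns $h_{G_\tau}(a,c)=l_{G_\tau}+1=3\ne 2=h_{T_\tau}(a,c)$. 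This is precisely your ``isolated sink'' at an internal height: no triplet $ac|\cdot$ lies in $\tau$, so the climbing arc out of $ac$ is missing, and no argument can manufacture it. Note that the paper's proof silently breaks on the same example: its Case 2 asserts $h_{G_\tau}(a,b)=(l_{T_\tau}-l_{T_{\tau_i}}-1)+h_{G_{\tau_i}}(a,b)$ ``by construction of $G_\tau$ from $G_{\tau_i}$,'' which for the pair $ac$ above yields $2$ rather than $3$ (the within-block peeling values do not shift uniformly when a block node is a global sink), and its claim that every subtree $T_i$ has depth $l_{T_\tau}-1$ is likewise unjustified. So your diagnosis of where the difficulty sits is exactly right, and your argument does prove the theorem under an added hypothesis such as density of $\tau$ (equivalently $\tau=\tau(T_\tau)$ on each triple); but as submitted it proves a weaker statement than the one claimed, and the full claim is in fact not true.
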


\begin{proof} The proof proceeds by induction on $|L_{T_{\tau}}|$.
It is trivial when $|L_{T_{\tau}}|=3$. Assume that theorem holds when
 $|L_{T_{\tau}}| \leq k $.
Let $|L_{T_{\tau}}|=k+1$ and $T_{1}, T_{2}, \ldots, T_{m}$ be
$m$ subtrees which are obtained from $T_{\tau}$ by removing its root.
For each $i$, $1 \leq i \leq m$ let $\tau_{i}=\tau |_{L_{T_i}}$ 
, and $r_{i}$ be the root of $T_{i}$. 
By the induction assumption for each $i$, $1 \leq i \leq m$,
$h_{G_{\tau_{i}}}=h_{T_{\tau_{i}}}$.
Moreover we conclude from the BUILD algorithm that $T_{i}=T_{\tau_{i}}$, for $1 \leq i \leq m$. 
Thus $h_{G_{\tau_{i}}}=h_{T_{i}}$, for $1 \leq i \leq m$. 
So for $i$, $1 \leq i \leq m$, the maximum length of the longest path in $T_{i}$ is $l_{T_{\tau}}-1$.
It means that for $i$, $1 \leq i \leq m$, the maximum length of the longest path in $G_{\tau_{i}}$ is $l_{T_{\tau}}-2$.
Therefore by the proof of Theorem \ref{thormgtaudag} the length of the longest path in $G_{\tau}$ is $l_{T_{\tau}}-1$.
Let $a, b \in L_{T_{\tau}}$. We have two cases. 

{\bf Case 1}:
For some $i$ and $j$, $1 \leq i < j \leq m$, $a \in L_{T_{i}}$ and $b \in L_{T_{j}}$. 
Since the outdegree of $ab$ in
$G_{\tau}$ is zero and $c_{ab}=r$, then
$h_{T_{\tau}}(a, b)=l_{T_{\tau}}=h_{G_{\tau}}(a,b)$.

{\bf Case 2}:
For some $i$, $1 \leq i \leq m$, $a,b \in L_{T_{i}}$. By the induction assumption $h_{G_{\tau_{i}}}(a,b)=h_{T_{\tau_{i}}}(a,b)$. 
Therefore
$h_{T_{\tau}}(a,b)=l_{T_{\tau}}-d_{T_{\tau}}(c_{ab},r)=l_{T_{\tau}}-(d_{T_{\tau_{i}}}(c_{ab},r_{i})+1)=(l_{T_{\tau}}-l_{T_{\tau_{i}}}-1)+(l_{T_{\tau_{i}}}-d_{T_{\tau_{i}}}(c_{ab},r_{i}))=$
$ (l_{T_{\tau}}-l_{T_{\tau_{i}}}-1)+h_{T_{\tau_{i}}}(a,b)=(l_{T_{\tau}}-l_{T_{\tau_{i}}}-1)+h_{G_{\tau_{i}}}(a,b)=h_{G_{\tau}}(a,b)$.
The last equality is obtained by construction of $G_{\tau}$ from $G_{\tau_{i}}$ which is stated in the proof of Theorem \ref{thormgtaudag}.
So for each $a, b \in L_{T_{\tau}}$, $h_{T_{\tau}}(a, b)=h_{G_{\tau}}(a,b)$ and the proof is complete. 
\end{proof}

\,\,\,\,\,\,\,

Now we describe an algorithm similar to BUILD algorithm, using height function. We refer to this algorithm by {\it HBUILD}.
Let $h$ be a height function on $X$.
Define a weighted complete graph $(G,h)$ where $V(G)=X$
and edge $\{i, j\}$ has weight $h(i,j)$. 
Remove the edges with maximum weight from $G$.
If removing these edges results a connected graph the algorithm stops. Otherwise, the process
of removing the edges with maximum weight is continued in each connected component until each connected component
contains only one node.
At the end of this procedure one can 
reconstruct the tree by reversing the steps of the algorithm similar to BUILD algorithm
(see Fig. 2). 
The algorithm above decides in polynomial time whether a
tree with height function $h$ exists. 

Now if $\tau$ is a set of triplets which is consistent with a tree,
by 
Theorems \ref{thormgtaudag} and \ref{thorm2hequal}, $G_{\tau}$ is a DAG and
$h_{G_{\tau}}=h_{T_{\tau}}=h$ and HBUILD algorithm constructs $T_{\tau}$.

\begin{figure}[h]
\centering \subfigure[]{
\includegraphics[scale=0.5]{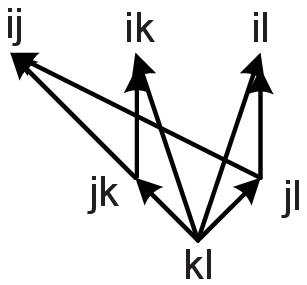}
\label{gtau} } \subfigure[]{
\includegraphics[scale=.5]{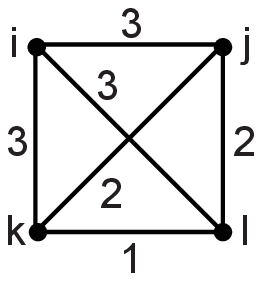}
\label{gh} } \subfigure[]{
\includegraphics[scale=.5]{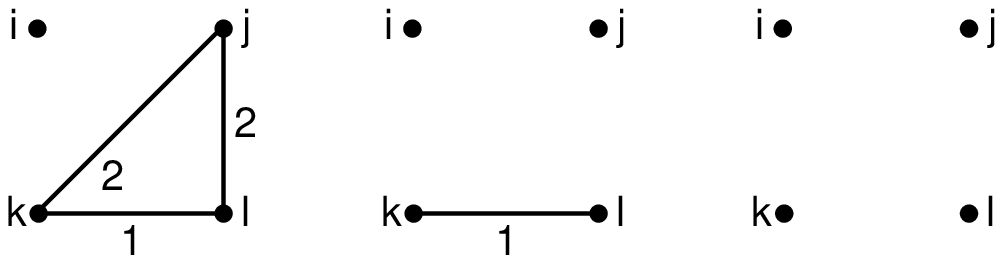}
\label{ghremoveall} }  \subfigure[]{
\includegraphics[scale=0.5]{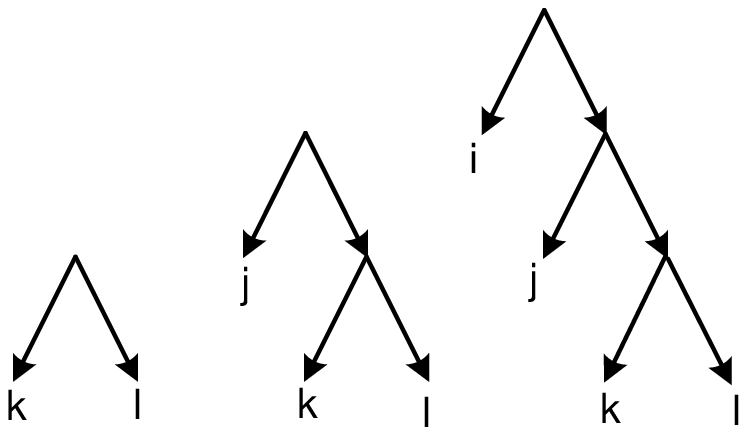}
\label{ghtreeall} } 
 \label{gtughtre} \caption {The steps of constructing $T_{\tau}$ from the given set $\tau=\{kl|j, kl|i, jk|i, jl|i\}$. 
 (a) The graph $G_{\tau}$. (b) The graph $(G,h)$. (c) Removing maximum weights from the graph $(G,h)$. 
 (d) Constructing $T_{\tau}$ using step c. } 
\end{figure}

Now we generalize the concept of height function from trees to networks.
This generalization is not straightforward because the concept of (lowest) common ancestor of
two leaves of a network is not well-defined.
Let $N$ be a network with the root $r$ and $l_N$ be
the length of the longest directed path from $r$ to the leaves. 
For each node $u$ consider $d(u,r)$ as the length of the longest directed path from $r$ to $u$. 
For any two nodes $u$ and $v$, we call $u$ an {\it ancestor} of $v$,
if there exists a directed path from $u$ to $v$. If $u$ is an ancestor of $v$ then we say that $v$ is {\it lower} than $u$.
{\it Lowest common ancestor} of two leaves in a network is not necessarily unique. 
For any two leaves $i$ and $j$, let $C_{ij}$
be the set of all lowest common ancestors of $i$ and $j$.

\,\,\,\,\,\,\,

\begin{definition}
\label{heightnetwork}
For each pair of leaves $i$ and $j$, define $h_{N}(i,j)= min \{l_{N}-d(c,r) : c \in C_{ij} \}$
and call it the {\it height function} of $N$.
\end{definition}

Obviously, every network $N$ indicates a unique
height function $h_N$. But two different networks may have the same height function
(see Fig. \ref{sameheight}) .

\begin{figure}[h]
\centering   \subfigure[$T$ ~~~~~~ $N$]{
\includegraphics[scale=.5]{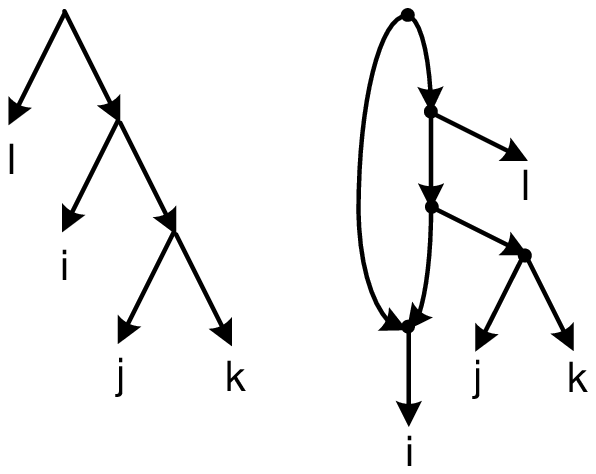}
\label{sameheight} } \subfigure[$T_1$ ~~~~~~~~~~~~~~ $T_2$]{
\includegraphics[scale=.5]{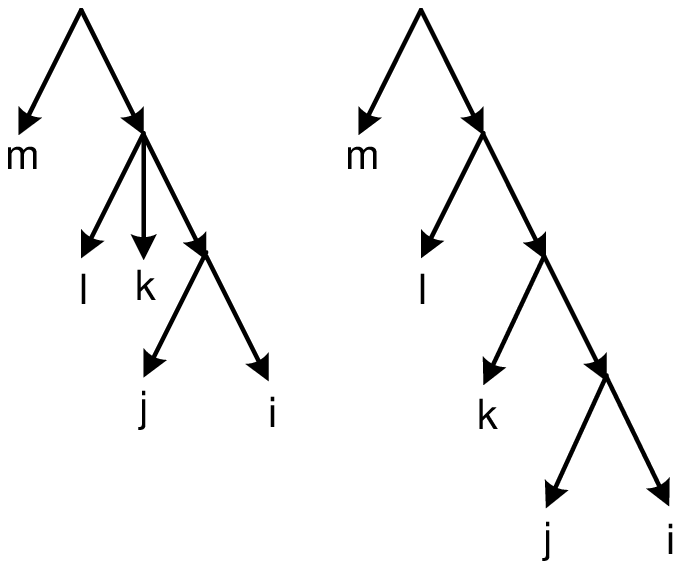}
\label{binarization} }
 \label{sameheightnetdif} \caption{  (a) Two different networks with the same height function. $h_{N}=h_{T}=h$.   
 $h(j,k)=1, h(i,j)=h(i,k)=2 ~$  and $~ h(i,l)=h(j,l)=h(k,l)=3$. (b) $T_2$ is a binarization of $T_1$.}
\end{figure}

In the following proposition we prove that for a given height function $h$ there
is a network $N$ such that $h_N=h$.

\,\,\,\,\,\,\,

\begin{proposition}
\label{heightindicateanet} Let $X$ be an arbitrary finite set and
$h$ be a height function on $X$. Then there 
exists a network $N$ not necessarily binary, such that $h_N=h$.
\end{proposition}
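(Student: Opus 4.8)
The plan is to give an explicit construction of $N$ and then verify, directly from Definition \ref{heightnetwork}, that $h_N=h$. Set $M=\max_{\{i,j\}}h(i,j)$. I will build $N$ so that the length $l_N$ of its longest root–leaf path equals $M$, and so that for every pair $\{i,j\}$ there is a common ancestor of $i$ and $j$ at depth $M-h(i,j)$ but no common ancestor at any greater depth. Since $h_N(i,j)=\min_{c\in C_{ij}}(l_N-d(c,r))=l_N-\max_{c\in C_{ij}}d(c,r)$ and the deepest common ancestor of $i,j$ is automatically a lowest one, this condition is exactly equivalent to $h_N(i,j)=h(i,j)$. Thus the whole problem reduces to controlling, simultaneously for all pairs, the maximal depth at which two leaves first admit a common ancestor.

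I would build $N$ recursively, imitating the HBUILD decomposition but repairing it with reticulation nodes. Form the weighted complete graph $(G,h)$ on $X$ and delete the edges of maximum weight $M$; let $B_1,\dots,B_s$ be the remaining connected components. Every pair whose endpoints lie in two different blocks has height $M$, so these pairs must meet exactly at the root. I therefore create a root $r$, recursively construct a network $N_t$ realizing the restriction $h|_{B_t}$ for each block, and hang the $N_t$ below $r$ (a non-binary branching when $s>2$, which is permitted here). When $s\ge 2$ and, in addition, no pair inside a single block attains height $M$, this is precisely the tree step, and the depth bookkeeping used in the proof of Theorem \ref{thorm2hequal} carries over.

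The two places where a tree cannot suffice, and where reticulation nodes enter, are: (i) the deletion of the weight-$M$ edges leaves $G$ connected ($s=1$); and (ii) some pair $\{i,j\}$ with $h(i,j)=M$ lies inside a single block because $i$ and $j$ are joined through intermediate leaves by edges of smaller weight. In both situations the relation ``$h(\cdot,\cdot)<M$'' fails to be transitive, so $X$ cannot be partitioned into root-subtrees. My remedy is to route the offending leaves to the top through a reticulation: I give such a leaf (or the top of its block) an additional incoming arc via a reticulation node, so that $r$ becomes a common ancestor of $i$ and $j$ at depth $0$ while the lower structure producing the smaller-height pairs is preserved. The ``not necessarily binary'' clause is what licenses these extra arcs, and also lets me pad a root–leaf path with reticulation ``diamonds'' when a prescribed integer height forces a longer spine than the few leaves of a block can branch out, adding length without adding leaves.

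The verification step is to check Definition \ref{heightnetwork} pair by pair: for $\{i,j\}$ I must confirm that the construction creates a common ancestor at depth $M-h(i,j)$ and, crucially, \emph{no} common ancestor at any greater depth. The main obstacle is exactly this last point. Because $h_N$ is defined by a minimum over lowest common ancestors, a single unintended arc can manufacture a spurious deeper common ancestor and push $h_N(i,j)$ strictly below $h(i,j)$. Controlling the longest-path depths $d(\cdot,r)$ after the reticulations are inserted, and proving that the added arcs never create a deeper meeting point for \emph{any} pair, is the technical heart of the argument; the induction on $|X|$ (equivalently on $M$) is then routine once this local ``no-deeper-common-ancestor'' invariant is maintained at every step.
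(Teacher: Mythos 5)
Your opening reduction is correct (the deepest common ancestor is automatically a lowest one, so $h_N(i,j)=l_N-\max\{d(c,r): c \text{ a common ancestor of } i,j\}$), but what follows is a plan rather than a proof, and its key repair step fails. First, you explicitly defer the decisive claim --- that after the reticulation ``repairs'' no pair acquires a common ancestor deeper than intended --- calling it ``the technical heart'' to be maintained; that claim \emph{is} the proposition, so nothing has actually been proved. Second, and more seriously, the repair you propose for your case (ii) is aimed in the wrong direction. Take a pair $\{i,j\}$ with $h(i,j)=M$ lying inside one block $B_t$. Under your own normalization $l_N=M$, realizing $h_N(i,j)=M$ forces $i$ and $j$ to have \emph{no} common ancestor of positive depth (only the root has depth $0$). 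But once $i$ and $j$ both hang below the sub-root $r_t$ of the recursively built $N_t$, the node $r_t$ is a common ancestor of depth at least $1$, and it remains one no matter what arcs you add: adding an incoming arc to a leaf or to the top of a block only enlarges ancestor sets, it can never delete $r_t$ from the set of common ancestors, and $h_N$ is governed by the deepest such node. Hence $h_N(i,j)\le M-1$ after your repair; and if the added arcs inflate $l_N$ beyond $M$, then any max-height pair straddling two blocks overshoots to $l_N>M$ instead. (Saying the repair makes ``$r$ a common ancestor at depth $0$'' is vacuous: the root is a common ancestor of every pair in every network.) The only way to realize a height-$M$ pair under your normalization is to make the proper ancestor sets of $i$ and $j$ disjoint, i.e., to split the block at the top and reattach the leaves having small heights with both sides as reticulation leaves with one parent on each side; a block-preserving recursion plus added arcs cannot produce this, so case (i) suffers the same defect.

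The paper avoids all of this by not decomposing at all. For every pair $\{x,y\}$ it creates a \emph{private} node $c_{xy}$, joins it to the root $r$ by a private path of $h_{max}-h(x,y)$ edges, and makes $x$ and $y$ children of $c_{xy}$ (pairs of maximal height are attached directly to $r$); multiple edges are then merged. Because the paths of distinct pairs are internally disjoint, the common ancestors of $x$ and $y$ are exactly $r$ together with the vertices of the path down to $c_{xy}$ --- a chain --- so the deepest common ancestor is $c_{xy}$ by construction, and the ``no spurious deeper meeting point'' invariant that defeats your recursion holds for free; the ``not necessarily binary'' clause absorbs the fact that every leaf receives one parent per partner. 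The one bookkeeping point that needs care in this construction (and which your reduction makes explicit) is forcing $l_N=h_{max}$: as written, the deepest leaf sits only one edge below its deepest meeting node, so $l_N=h_{max}-\min h+1$ and the stated network realizes $h$ exactly only when $\min h=1$; this is fixed by lengthening one pendant edge, e.g.\ replacing the edge $c_{x^*y^*}\to x^*$ for a minimum-height pair by a path of length $\min h$. That is a one-line patch, whereas the gap in your construction cannot be closed without abandoning its basic mechanism.
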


\begin{proof} Let $h_{max}= max \{h\{x, y\}  :  x, y \in X\}$.
Let $r$ be the root of $N$. For each pair of nodes $x$ and $y$ with
$h(x, y) = h_{max}$, we connect $x$ and $y$ to $r$.
For each pair of nodes $x$ and $y$ with $h(x, y) < h_{max}$, we consider a new node and
connect this node to $r$ by $h_{max}-h(x, y)$ edges.
By deleting multiple edges we obtain a network $N$ with $h_{N}=h$.
\end{proof}

\,\,\,\,\,\,\,

The following theorem shows the relation between height function of a network
and the triplet consistency with it.

\,\,\,\,\,\,\,

\begin{theorem}
\label{thorm3} Let $N$ be a network, $i, j$, and $k$ be its three distinct leaves.
If $h_{N}(i, j) < h_{N}(i, k)$ or $h_{N}(i, j) < h_{N}(j, k)$ then $ij|k$ is consistent with $N$.
\end{theorem}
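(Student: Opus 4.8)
The plan is to reduce the statement about networks to the already-proved tree case (Theorem~\ref{thormhandtree}) by exhibiting an explicit witness for the triplet $ij|k$. Recall that consistency of $ij|k$ with $N$ means, by definition, that $N$ contains distinct nodes $u$ and $v$ together with pairwise internally node-disjoint paths $u \rightarrow i$, $u \rightarrow j$, $v \rightarrow u$, and $v \rightarrow k$. So the real task is to \emph{produce} such a pair $(u,v)$ from the height inequality, using the fact that $h_N$ is defined via a \emph{minimum} over the set $C_{ij}$ of lowest common ancestors.

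First I would assume, without loss of generality, that $h_N(i,j) < h_N(i,k)$. I would unwind the definition $h_N(i,j) = \min\{l_N - d(c,r) : c \in C_{ij}\}$ and pick a node $u \in C_{ij}$ that \emph{realizes} this minimum, so that $h_N(i,j) = l_N - d(u,r)$, i.e. $d(u,r)$ is as large as possible among lowest common ancestors of $i$ and $j$. Symmetrically, let $w \in C_{ik}$ realize $h_N(i,k)$, so $d(w,r) < d(u,r)$. Since $u$ is a common ancestor of $i$ and $j$, there are directed paths $u \rightarrow i$ and $u \rightarrow j$; since $w$ is a common ancestor of $i$ (and of $k$), there is a directed path $w \rightarrow i$, and a path $w \rightarrow k$. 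The goal is to show that $w$ (or some ancestor related to it) serves as $v$, and that one can route the four paths so that they are internally node-disjoint.

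The key geometric idea I would push is that $w$ must be an ancestor of $u$. Intuitively, both $u$ and $w$ lie ``above'' the leaf $i$ along the directed structure reaching $i$, and since $d(w,r) < d(u,r)$, the node $w$ sits strictly higher. On a tree this forces $w$ to lie on the unique path from the root to $u$, which is exactly the argument used in the proof of Theorem~\ref{thormhandtree}. On a network the ``path to $i$'' need not be unique, so the careful step is to argue that because $u$ is a \emph{lowest} common ancestor of $i,j$ realizing the maximal depth, and $w$ is a common ancestor of $i,k$ of strictly smaller depth, there is a directed path $w \rightarrow u$ which can be chosen to avoid the interiors of the paths $u \rightarrow i$, $u \rightarrow j$, and to meet $k$ only through $w \rightarrow k$. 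Setting $v = w$ then yields the required configuration $u \rightarrow i$, $u \rightarrow j$, $v \rightarrow u$, $v \rightarrow k$, and hence $ij|k \in \tau(N)$.

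The main obstacle I anticipate is precisely this disjointness/routing step: in a DAG the lowest common ancestor is not unique and the paths witnessing ancestry can share internal nodes, so establishing that $v \rightarrow u$ can be taken internally disjoint from the other three paths (in particular that the $v \rightarrow k$ path does not collide with the descendants of $u$) requires choosing the lowest common ancestors and the connecting paths with some care, and invoking the acyclicity of $N$ to rule out the bad crossings. I would handle this by always selecting depth-maximal lowest common ancestors and by using the fact that $d(\cdot,r)$ is the length of the \emph{longest} path from $r$, so that comparisons of depth translate into genuine ancestor relations. Note, as the paper flags via Fig.~\ref{counterexam}, that only one implication holds here (the converse fails), so I would not attempt to prove an ``if and only if''; establishing the single direction from the height inequality to consistency is exactly what is needed.
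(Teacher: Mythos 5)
Your setup (realizing the height minima by depth-maximal lowest common ancestors $u \in C_{ij}$ and $w \in C_{ik}$) matches the paper's, but the pivotal claim you build on is false: in a network, $d(w,r) < d(u,r)$ does \emph{not} imply that $w$ is an ancestor of $u$, even though $d(\cdot,r)$ is longest-path depth. That implication is exactly the tree fact that fails once reticulations are allowed, and indeed the depth-maximal element $w$ of $C_{ik}$ need not be an ancestor of $u$ at all. Concretely, take the network with root $r$ and edges $r \rightarrow a$, $r \rightarrow w$, $a \rightarrow u$, $a \rightarrow m$, $u \rightarrow j$, $u \rightarrow x$, $w \rightarrow k$, $w \rightarrow x$, $x \rightarrow i$, where $x$ is a reticulation node and $i,j,k,m$ are leaves. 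Here $C_{ij}=\{u\}$, $C_{ik}=\{w\}$, $l_N=4$, so $h_N(i,j)=2<3=h_N(i,k)$ and the hypothesis of Theorem~\ref{thorm3} holds; but there is no directed path from $w$ to $u$, so setting $v=w$ can never supply the required leg $v \rightarrow u$. (The triplet $ij|k$ \emph{is} consistent with this network, but the witness is $v=r$, not $w$.) So the existence of the fourth path, which you yourself flag as the main obstacle, is precisely where your argument breaks, and your proposed repair (depth comparison $\Rightarrow$ ancestry) cannot close it.

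The gap can be closed, but by a different mechanism; note that the paper's own proof is also silent on this leg (it only verifies that the chosen $i$-path and $k$-path cannot meet). The statement that does the real work is: under $h_N(i,j)<h_N(i,k)$, no node that is $u$ or a descendant of $u$ can be an ancestor of $k$ --- otherwise $u$ itself would be a common ancestor of $i$ and $k$, hence some lowest common ancestor of $i,k$ would lie at depth at least $d(u,r)$, forcing $h_N(i,k) \leq l_N - d(u,r) = h_N(i,j)$, a contradiction. With this in hand, take any path $q$ from $r$ to $k$ and any path $p$ from $r$ to $u$, and let $v$ be the \emph{last} node of $p$ lying on $q$. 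Then the segments $v \rightarrow u$ (along $p$) and $v \rightarrow k$ (along $q$) share only $v$, by the choice of $v$; the segment $v \rightarrow u$ meets the paths $u \rightarrow i$ and $u \rightarrow j$ only at $u$, by acyclicity; and the segment $v \rightarrow k$ is disjoint from $u \rightarrow i$ and $u \rightarrow j$, because every node of those two paths is $u$ or a descendant of $u$, hence not an ancestor of $k$ (and $k$ itself, being a leaf distinct from $i$ and $j$, lies on neither path). Observe that this last clash --- in particular with the $j$-path --- is excluded by the height hypothesis via the stated lemma, \emph{not} by acyclicity as your sketch suggests. Finally $v \neq u$ since $v$ is an ancestor of $k$ and $u$ is not, and the paths $u \rightarrow i$, $u \rightarrow j$ share only $u$ because $u$ is a lowest common ancestor of $i,j$. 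This yields exactly the configuration required by the definition of consistency.
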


\begin{proof} Suppose that $h_{N}(i, j) < h_{N}(i, k)$. 
Let $v_{ij}$ and $v_{ik}$ be common ancestors of $i, j$ and $i, k$ respectively,
such that $h_{N}(i, j) = l_{N}-d(v_{ij}, r)$ and $h_{N}(i, k) = l_{N}-d(v_{ik}, r)$. Let $l_{i}$ and $l_{j}$ be two distinct paths from
$v_{ij}$ to $i$ and $j$, respectively. Let $l_{k}$ be an arbitrary path from $v_{ik}$ to $k$.
If $l_{i} \cap l_{k} \neq \emptyset$ then it follows that $h_{N}(i,j) \geq h_{N}(i,k)$ which is a contradiction. So $ij|k$ is consistent with $N$.
\end{proof}

\,\,\,\,\,\,\,

The reverse of the above theorem is not necessarily true.
For example, consider the network of Fig. \ref{counterexam} .
The triplet $ij|k$ is consistent with it, but $h(i,j)=h(i,k)=3$ and $h(j,k)=2$.

The basic idea of the TripNet algorithm is to find a height
function as an intermediate computational step that yields the
minimum amount of information required to construct the network from a set of triplets.
So it is important to find a way for computing $h_{N}$ from a set of
triplets. In the rest of this section we introduce a computational method for computing $h_{N}$
using Integer Programming.
Let $\tau$ be a set of triplets with $|L(\tau)|=n$.
Inspiring from Theorems \ref{thormhandtree} and \ref{thorm3}, for each triplet $ij|k \in \tau$, define two inequalities $h(i,k)-h(i,j) \geq 1$ and
$h(j,k)-h(i,j) \geq 1$. Since the number of variables in such inequalities are at most $c(n,2)$, we obtain the following
system of inequalities from $\tau$.
\begin{eqnarray*}
&&h(i, k) - h(i, j) \ge 1 \,\,\,\,\,\,\,\,\,  ij|k \in \tau, \\
&&h(j, k) - h(i, j) \ge 1 \,\,\,\,\,\,\,\,\,  ij|k \in \tau, \\
&&0 < h(i, j) \le c(n, 2) \,\,\,\,\,\,\,\,\, 1 \leq i, j \leq n.
\end{eqnarray*}
%
Let $s$ be an integer. Define the following Integer Programming and call it IP($\tau, s$).
\begin{eqnarray*}
\text{Maximize}&&\sum_{1 \leq i, j \leq n} {h(i, j)}, \\
\text{Subject to}:&&,h(i, k) - h(i, j) \ge 1~~~~~~ij|k \in \tau, \\
&&h(j, k) - h(i, j) \ge 1~~~~~~ij|k \in \tau, \\
&&0 < h(i, j) \le s ~~~~~~~~~~~~1 \leq i, j \leq n.
\end{eqnarray*}
%
%
%
%
%
%

Intuitively if IP($\tau, s$) has a feasible solution, we expect that the optimal
solution to this integer programming is an approximation of the height function of
an optimal network $N$ consistent with $\tau$. The following theorems support this intuition.

\,\,\,\,\,\,\,

\begin{theorem}
\label{thormipsmaxforl} Let $\tau$ be a set of triplets. 
Then $G_\tau$ is a DAG if and only if for some integer $s$, the IP($\tau, s$) has a feasible solution.
In this case the minimum number $s$, for which
IP($\tau, s$) has a feasible solution, is $\l_{G_{\tau}}+1$.  
\end{theorem}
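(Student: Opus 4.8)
The plan is to translate the integer program into a purely graph-theoretic statement about $G_\tau$ and then read off both assertions from the height function $h_{G_\tau}$. First I would record the key dictionary between the two objects. Every edge of $G_\tau$ is of the form $(ij, ik)$ or $(ij, jk)$ coming from a triplet $ij|k \in \tau$, and the two families of constraints in IP($\tau, s$) say precisely that $h(ik) - h(ij) \ge 1$ and $h(jk) - h(ij) \ge 1$. Hence an integer vector $h$ satisfies the triplet constraints of IP($\tau, s$) if and only if $h(v) - h(u) \ge 1$ for every edge $(u,v) \in E(G_\tau)$, while the remaining constraint $0 < h \le s$ simply confines the values to $\{1, \dots, s\}$. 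In words, a feasible solution is an integer labeling of the vertices of $G_\tau$ that strictly increases by at least one along every edge and takes values in $[1, s]$.

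With this dictionary in hand, the easy implication of the equivalence is immediate. If IP($\tau, s$) has a feasible solution $h$ and $G_\tau$ contained a directed cycle $u_0 \to u_1 \to \cdots \to u_m = u_0$, then summing the edge inequalities around the cycle would yield $0 = h(u_0) - h(u_0) \ge m \ge 1$, a contradiction; so $G_\tau$ is a DAG.

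For the converse, and for the exact value of the minimum $s$, I would invoke the height function $h_{G_\tau}$ of Definition~\ref{dagdefine}. The first step is to identify the layer-stripping procedure with longest paths: the sinks (outdegree zero) receive $l_{G_{\tau}}+1$, and in general a node $v$ is removed at the stage that assigns it the value $l_{G_{\tau}} + 1 - d^+(v)$, where $d^+(v)$ is the length of the longest directed path in $G_\tau$ starting at $v$. This holds because $v$ becomes a sink exactly once all its out-neighbours have been removed, and $\max\{d^+(w) : (v,w)\in E\} = d^+(v)-1$. Consequently the values of $h_{G_\tau}$ lie in $\{1, \dots, l_{G_{\tau}}+1\}$, and for any edge $(u,v)$ we have $d^+(u) \ge d^+(v)+1$, whence $h_{G_\tau}(v) - h_{G_\tau}(u) = d^+(u) - d^+(v) \ge 1$. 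Thus $h_{G_\tau}$ is a feasible solution of IP($\tau, l_{G_{\tau}}+1$); this proves the converse implication and shows that the minimum feasible $s$ is at most $l_{G_{\tau}}+1$.

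Finally, to rule out any smaller $s$, I would take a longest directed path $v_0 \to v_1 \to \cdots \to v_{l_{G_{\tau}}}$ in $G_\tau$ and telescope the edge inequalities for an arbitrary feasible $h$, obtaining $h(v_{l_{G_{\tau}}}) \ge h(v_0) + l_{G_{\tau}} \ge 1 + l_{G_{\tau}}$, which forces $s \ge l_{G_{\tau}}+1$. Combining the two bounds gives that the minimum feasible $s$ equals $l_{G_{\tau}}+1$. I do not anticipate a deep obstacle here; the only delicate point is the bookkeeping that identifies the value assigned by the stripping procedure with $l_{G_{\tau}}+1-d^+(\cdot)$ and keeps the direction of each inequality straight, so I would state and justify that identification carefully before using it in the feasibility argument.
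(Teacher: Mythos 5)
Your proof is correct, and it takes a genuinely different route from the paper's. The paper proves the forward direction (DAG $\Rightarrow$ feasibility, with minimum $s=l_{G_{\tau}}+1$) by induction on $l_{G_{\tau}}$: it peels off the set $A$ of terminal nodes of longest paths, deletes the corresponding triplets to get $\tau'$ with $l_{G_{\tau'}}=l_{G_{\tau}}-1$, invokes the induction hypothesis, and extends the resulting feasible solution by assigning $l_{G_{\tau}}+1$ to the nodes of $A$; minimality is handled by the inductive claim that a feasible solution for IP($\tau,s$) restricts to one for IP($\tau',s-1$). You instead give a direct, non-inductive construction: you identify the value produced by the layer-stripping procedure of Definition~\ref{dagdefine} with $l_{G_{\tau}}+1-d^{+}(v)$, where $d^{+}(v)$ is the longest-path length out of $v$, check feasibility edge by edge from $d^{+}(u)\ge d^{+}(v)+1$, and obtain the lower bound $s\ge l_{G_{\tau}}+1$ by telescoping the constraints along a single longest path. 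The cycle-summation argument for the direction ``feasible $\Rightarrow$ DAG'' is the same in both. What your approach buys is rigor and economy: the explicit formula $h_{G_{\tau}}=l_{G_{\tau}}+1-d^{+}(\cdot)$ makes feasibility a one-line verification and replaces the paper's somewhat delicate stripping step (where deleting a triplet $ix|j$ with $ij\in A$ in fact removes \emph{both} edges $(ix,ij)$ and $(ix,xj)$, a point the paper glosses over), and the telescoping argument for minimality is cleaner than threading it through the induction. As a bonus, your argument shows that the feasible solution exhibited is exactly the height function $h_{G_{\tau}}$ of the paper, which the paper's own later discussion (Theorem~\ref{thormipfortuniquemax}) implicitly relies on.
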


\begin{proof} 
Let $G_{\tau}$ be a DAG. 
Without loss of genrerality assume that $G_{\tau}$ is connected.
The proof proceeds by induction on $l_{G_{\tau}}$. 
If $l_{G_{\tau}}=1$ then obviously for $s=1$, IP($\tau, s$) has no feasible solution
and for each $s \geq 2$, IP($\tau, s$) has a feasible solution.
Assume that the theorem holds for $l_{G_{\tau}} \leq k$. 
Suppose that $\tau$ is a set of triplets with $l_{G_{\tau}}=k+1$. 
Let $A$ be the set of the terminal nodes of all longest paths in $G_{\tau}$.
For each $ij \in A$ there is some $x \in L(\tau)$ such that $ix|j \in \tau$. 
Let $B$ be the set of all such triplets and $\tau'=\tau \setminus B$.
Apparently, $B \neq \emptyset$ and the length of the longest path in $G_{\tau'}$ is $k$.
By the induction assumption the minimum number $s$ for which
IP($\tau', s$) has a feasible solution, is $\l_{G_{\tau'}}+1=\l_{G_{\tau}}$.  
Consider IP($\tau, \l_{G_{\tau}}+1$). Define $h(i, j)=\l_{G_{\tau}}+1$, for each $ij \in A$ and $h(t,l)=h'(t,l)$, for each $tl \notin A$. 
$h$ is a feasible solution to IP($\tau, \l_{G_{\tau}}+1$).
Now if $s$ be a solution for IP($\tau, s$) then $s-1$ is a solution for IP($\tau', s-1$).
So $\l_{G_{\tau}}+1$ is minimum soltion for IP($\tau, s$).
Now suppose that $\tau$ is a set of triplets  
and for some integer $s$, IP($\tau, s$) has a feasible solution $h$.
Assume that $G_{\tau}$ has a cycle
$C=i_{1} j_{1} \rightarrow i_{2} j_{2} \rightarrow \ldots \rightarrow i_{m} j_{m} \rightarrow i_{1} j_{1}$.
Corresponds to $C$ we have inequalities $h(i_{1},j_{1}) < h(i_{2},j_{2}) < \ldots <h(i_{m},j_{m}) <h(i_{1},j_{1})$
which is a contradiction
and the proof is complete.
\end{proof}

 \,\,\,\,\,\,\,

Let $\tau$ be a set of triplets consistent with a tree.
By Theorems \ref{thormgtaudag}, \ref{thormhandtree}, and \ref{thormipsmaxforl},  
$h_{T_{\tau}}$ is a feasible solution to IP($\tau, l_{G_{\tau}}+1$).
In the following theorem we prove the uniqueness of this solution.

\,\,\,\,\,\,\,

\begin{theorem}
\label{thormipfortuniquemax} Let $\tau$ be a set of triplets consistent with a tree.
Then $h_{T_{\tau}}$ is the unique optimal solution to IP($\tau, l_{G_{\tau}}+1$).
\end{theorem}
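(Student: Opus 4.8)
The plan is to show that every feasible solution $h$ of IP($\tau, l_{G_{\tau}}+1$) is dominated pointwise by $h_{G_{\tau}}$, and then to invoke the identity $h_{G_{\tau}}=h_{T_{\tau}}$ from Theorem \ref{thorm2hequal} together with the already-established feasibility of $h_{T_{\tau}}$. Since the objective is the sum $\sum_{i,j} h(i,j)$ of all the variables, a pointwise upper bound that is attained by $h_{T_{\tau}}$ yields optimality and uniqueness in one stroke.

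First I would translate the constraints of the IP into the language of $G_{\tau}$. Each edge $(ij, ik)$ of $G_{\tau}$ comes from a triplet $ij|k \in \tau$, and the associated constraint $h(i,k)-h(i,j)\ge 1$ says precisely that $h$ increases by at least one along that edge; the same holds for the edges $(ij,jk)$. Hence for every directed edge $(u,v)$ of $G_{\tau}$ we have $h(v)\ge h(u)+1$, and chaining these inequalities along any directed path $u=u_{0}\to u_{1}\to\cdots\to u_{d}$ gives $h(u_{d})\ge h(u)+d$.

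Next I would combine this with the definition of $h_{G_{\tau}}$. The peeling procedure of Definition \ref{dagdefine} assigns $h_{G_{\tau}}(u)=l_{G_{\tau}}+1-d(u)$, where $d(u)$ is the length of the longest directed path in $G_{\tau}$ starting at $u$ (sinks, peeled first, receive $l_{G_{\tau}}+1$). Fixing a node $u$, choosing a longest path $u\to\cdots\to u_{d(u)}$, and applying the upper-bound constraint $h(u_{d(u)})\le l_{G_{\tau}}+1$ yields $h(u)\le l_{G_{\tau}}+1-d(u)=h_{G_{\tau}}(u)$. Thus $h(i,j)\le h_{G_{\tau}}(i,j)$ for every pair, and therefore $\sum_{i,j} h(i,j)\le \sum_{i,j} h_{G_{\tau}}(i,j)$.

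Finally, since $h_{G_{\tau}}=h_{T_{\tau}}$ by Theorem \ref{thorm2hequal} (using that $G_{\tau}$ is a DAG by Theorem \ref{thormgtaudag}) and $h_{T_{\tau}}$ is feasible for IP($\tau, l_{G_{\tau}}+1$), the bound is attained, so $h_{T_{\tau}}$ is optimal. Uniqueness is then immediate from the pointwise nature of the domination: if a feasible $h$ also attains the maximal sum while $h(i,j)\le h_{G_{\tau}}(i,j)$ for all pairs, no single inequality can be strict, forcing $h=h_{G_{\tau}}=h_{T_{\tau}}$. I expect the main obstacle to be the bookkeeping in the third step, namely verifying that the peeling procedure assigns exactly $l_{G_{\tau}}+1-d(u)$ and that a longest path from $u$ terminates at a sink whose value is bounded by $l_{G_{\tau}}+1$; the passage from pointwise domination to uniqueness is routine.
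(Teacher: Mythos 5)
Your proposal is correct, but it follows a genuinely different route from the paper. The paper argues by induction on $l_{G_{\tau}}$, reusing the construction from its proof of Theorem \ref{thormipsmaxforl}: it removes the set $B$ of triplets whose corresponding edges end at terminal nodes of longest paths, invokes the induction hypothesis for $\tau'=\tau\setminus B$, and then asserts that optimality and uniqueness lift from IP($\tau'$, $l_{G_{\tau'}}+1$) to IP($\tau$, $l_{G_{\tau}}+1$) --- a lifting step the paper leaves essentially unjustified. You instead prove a pointwise bound: chaining the edge constraints of $G_{\tau}$ along a longest directed path from any node $u$ (of length $d(u)$) and using the cap $h\le l_{G_{\tau}}+1$ gives $h(u)\le l_{G_{\tau}}+1-d(u)=h_{G_{\tau}}(u)$ for every feasible $h$; your identification $h_{G_{\tau}}(u)=l_{G_{\tau}}+1-d(u)$ is indeed what the peeling procedure of Definition \ref{dagdefine} produces, and the endpoint of the chained path need not even be a sink, since the cap applies to all variables. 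Combined with $h_{G_{\tau}}=h_{T_{\tau}}$ (Theorem \ref{thorm2hequal}) and the feasibility of $h_{T_{\tau}}$ noted just before the theorem, optimality follows, and uniqueness is immediate because pointwise-dominated terms can only sum to the maximum if each is tight. Your approach buys three things: it avoids induction entirely, it makes the uniqueness argument explicit where the paper's is implicit, and it proves something stronger --- for any $\tau$ such that $G_{\tau}$ is a DAG, $h_{G_{\tau}}$ is the componentwise maximum of the feasible region and hence the unique optimum of IP($\tau$, $l_{G_{\tau}}+1$), with tree-consistency needed only to identify $h_{G_{\tau}}$ with $h_{T_{\tau}}$. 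What the paper's induction buys is uniformity of method with Theorem \ref{thormipsmaxforl}, but at the price of that unproven lifting step, whose honest justification is essentially your domination argument in disguise.
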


\begin{proof} 
By Theorem \ref{thormgtaudag}, $G_{\tau}$ is a DAG. So $l_{G_{\tau}}$ is well defined.
The proof proceeds by induction on $l_{G_{\tau}}$. Without loss of genrerality assume that $G_{\tau}$ is connected.
The theorem is trivial when $l_{G_{\tau}}=1$. Let for each set of triplets consistent with a tree,
$h_{T_{\tau}}$ be the unique optimal solution to 
IP($\tau, l_{G_{\tau}}+1$) where $l_{G_{\tau}}=k \geq 1$.
Suppose that $\tau$ is a set of triplets consistent with a tree and $l_{G_{\tau}}=k+1$.
Let $\tau'$ be the set of triplets which is introduced in the proof of Theorem \ref{thormipsmaxforl}. 
By the induction assumption
$h_{T_{\tau'}}$ is the unique optimal solution to IP($\tau', l_{G_{\tau'}}+1$).
By Theorem \ref{thormipsmaxforl} the minimum $s$ for which IP($\tau, s$)
has a feasible solution is  $l_{G_{\tau}}+1$.
Also $l_{G_{\tau'}}+1=l_{G_{\tau}}$. It follows that  
$h_{T_{\tau}}$ is the unique optimal solution to the IP($\tau, l_{G_{\tau}}+1$) and
the proof is complete.
\end{proof}

\,\,\,\,\,\,\,

The BUILD tree is not necessarily a binary tree. To obtain
a binary tree consistent with a set of triplets we do the following procedure.

Let $T$ be a tree and $x$ be a node of $T$ with $x_1, x_2, \ldots, x_k$, $k \geq 3$ as its childs.
Consider a new node $y$. Construct $T'$ by removing the edges $(x, x_1), (x, x_2), \ldots, (x, x_{k-1})$ from $T$
and adding the edges $(x, y), (y, x_1), (y, x_2), \ldots, (y, x_{k-1})$ to $T$.
Continuing the same method for each node with outdegree more than 2 we obtain a binary tree 
which we call it a {\it binarization} of $T$ (see Fig. \ref{binarization}).
Obviously, we can obtain different binarization of $T$.
The proof of the following theorem is easy and we omit it.
 
 \,\,\,\,\,\,\,
\begin{theorem}
\label{propos3} Let $\tau$ be a set of triplets that is consistent with a tree $T_1$, and $T_2$
be a binarization of $T_1$. Then $\tau$ is consistent with $T_2$. 
\end{theorem}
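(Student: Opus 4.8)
The plan is to prove the statement for a single application of the node-splitting operation used to define a binarization, and then to induct on the number of such operations: since $T_2$ is obtained from $T_1$ by a finite sequence of splits, it suffices to show that each split sends a tree consistent with a given triplet to a tree that is still consistent with it. Applying this to every triplet of $\tau$ along the whole sequence then yields $\tau \subseteq \tau(T_2)$, which is the assertion.

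Throughout I would use the following characterization of consistency for trees, immediate from the subdivision definition (and in the spirit of Theorem \ref{thormhandtree}): a triplet $ab|c$ is consistent with a tree $S$ if and only if $c^{S}_{ab}$ is a proper descendant of $c^{S}_{ac}$, where $c^{S}_{pq}$ denotes the lowest common ancestor of $p,q$ in $S$ (and $c^{S}_{ac}=c^{S}_{bc}$ in this situation). So fix one split: let $x$ be a node of $T$ with children $x_1,\dots,x_k$ ($k\ge 3$), and let $T'$ be obtained by inserting a new node $y$ so that in $T'$ the node $x$ has children $y$ and $x_k$, while $y$ has children $x_1,\dots,x_{k-1}$. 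Writing $S_i$ for the subtree rooted at $x_i$, the first step is to record how lowest common ancestors change: every ancestor--descendant relation among the old nodes is preserved, and $y$ is a new node lying strictly below $x$ and strictly above every $S_i$ with $i\le k-1$. Consequently, for any two leaves $p,q$ one has $c^{T'}_{pq}=c^{T}_{pq}$ unless $c^{T}_{pq}=x$, in which case $c^{T'}_{pq}=y$ when $p,q$ both lie in $S_1\cup\dots\cup S_{k-1}$ and $c^{T'}_{pq}=x$ otherwise (that is, when one of $p,q$ lies in $S_k$). In short, a split can only push an LCA equal to $x$ down to its child $y$.

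It then remains to check that the relation ``$c_{ab}$ is a proper descendant of $c_{ac}$'' survives this push. Set $u=c^{T}_{ab}$ and $v=c^{T}_{ac}$, so that $u$ is a proper descendant of $v$, and let $u'=c^{T'}_{ab}$, $v'=c^{T'}_{ac}$. I would split on whether $v=x$. If $v\neq x$ then $v'=v$, and since $u'$ is either $u$ or (when $u=x$) its child $y$, in both cases $u'$ is still a proper descendant of $v'=v$. If $v=x$, then $u$ is strictly below $x$, so $a,b$ lie in a common $S_i$ and $u'=u$; writing $c\in S_j$ with $j\neq i$, either $v'=x$ (when $i=k$ or $j=k$) and $u'=u$ is still strictly below $x=v'$, or $v'=y$ (when $i,j\le k-1$) and $u'=u\in S_i$ with $i\le k-1$ is still strictly below $y=v'$. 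In every case $c^{T'}_{ab}$ is a proper descendant of $c^{T'}_{ac}$, so $ab|c$ is consistent with $T'$, which completes the single-step claim and hence the theorem.

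The only real subtlety, and the part I expect to be the main obstacle, is the last case, where the demotion of $x$ to $y$ affects the \emph{higher} ancestor $v=c^{T}_{ac}$. There one must use precisely that $y$ was inserted as an ancestor of exactly $S_1,\dots,S_{k-1}$ to guarantee that the lower LCA $c^{T'}_{ab}$ is still strictly dominated by $c^{T'}_{ac}$; the other configurations are routine once the LCA-transformation rule of the preceding paragraph is in hand.
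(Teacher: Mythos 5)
Your proposal is correct and complete. Note that there is nothing in the paper to compare it against: the authors give no argument for Theorem~\ref{propos3} at all, stating only that ``the proof of the following theorem is easy and we omit it.'' Your write-up supplies exactly the omitted details, along what is surely the intended route: reduce to a single node-splitting step by induction on the number of splits, establish the LCA-transfer rule (the only pairwise LCAs that change are those equal to the split node $x$, and such an LCA moves down to the new node $y$ precisely when both leaves lie under $x_1,\dots,x_{k-1}$, staying at $x$ otherwise), and then verify in each of the three configurations that the relation ``$c_{ab}$ is a proper descendant of $c_{ac}$'' survives the split --- including the genuinely nontrivial case $c^{T}_{ac}=x$, which you handle correctly by observing that $a,b$ must then share a subtree $S_i$ so that $c^{T'}_{ab}=c^{T}_{ab}$ remains strictly below whichever of $x$ or $y$ becomes the new LCA of $a,c$. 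The only point worth spelling out in a full write-up is the characterization you invoke, that $ab|c$ is consistent with a tree if and only if $\mathrm{lca}(a,b)$ is a proper descendant of $\mathrm{lca}(a,c)$; this follows from the paper's subdivision definition of consistency and is essentially the content of Theorem~\ref{thormhandtree}, so the appeal is legitimate.
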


 \,\,\,\,\,\,\,

\section{TripNet algorithm}

Now we describe the TripNet algorithm in nine steps. 
In this algorithm the input is a set of triplets $\tau$ and the output is a network consistent with $\tau$.
Also if $\tau$ is consistent with a tree the algorithm constructs a binarization of $T_{\tau}$.
\\
\\
{\bf Step 1}: In this step we find a height function $h$
on $L(\tau)$. 
If $G_{\tau}$ is a DAG we set $G'_{\tau}=G_{\tau}$.
If $G_{\tau}$ is not a DAG we remove some edges from
$G_{\tau}$ in such a way that the resulting graph $G'_{\tau}$ is a
DAG. Set $h=h_{G'_{\tau}}$. 
\\
\\
If $\tau$ is obtained
from biological sequences using the QOT method, then
Theorems \ref{thormgtaudag} shows that $G_{\tau}$ is a DAG.
Removing minimum number of edges from a directed graph to
make it a DAG is known as the {\it minimum Feedback Arc Set} problem which is NP-hard \cite{feedbackasnp}. Thus, using 
a greedy algorithm, we try to remove as minimum number of edges 
as possible from $G_{\tau}$ in order to lose minimum information.
However, any such missing information will be recaptured in Step 9.
\\
\\
{\bf Step 2}:
In this step TripNet first apply HBUILD on $h$.
If the result is a tree, TripNet constructs a binarization of this tree. Otherwise 
TripNet goes to Step 3.
\\
\\
Note that if $\tau$ is consistent with a tree, by Theorem 
\ref{thorm2hequal}, $h_{G_{\tau}}=h_{T_{\tau}}$ and TripNet constructs
a binarization of $T_{\tau}$.
\\
\\
{\bf Step 3}: Remove all the maximum-weight edges from $G$. 
The process of removing all the maximum-weight edges from 
the graph continues until the resulting graph is disconnected.
\\
\\
In \cite{galled7} and \cite{Levelone} the authors introduced the concept of $SN$-sets
for a set of triplets $\tau$. 
A subset $S$ of $L(\tau)$ is
an {\it $SN$-set} if there is no triplet $i j|k \in \tau$ 
such that $i \notin S$ and $j,k \in S$.
In \cite{Levelone} it is shown that if $\tau$ is dense then 
the maximal $SN$-sets partition $L(\tau)$ and can be found in polynomial time.
By contracting each of the $SN$-set to a single node
and assuming a common ancestor for all of these leaves, the size of the problem is reduced. 
In these papers, for finding the
maximal $SN$-sets in polynomial time, the authors use the high density of
the input triplet sets. TripNet algorithm uses the concept of height function 
as an auxiliary tool to obtain $SN$-sets instead of the high density assumption.
\\ 
\\
{\bf Step 4}: 
For each connceted component obtaining 
in Step 3, which is not an $SN$-set we apply Step 3.
This process continues until all of the resulting components are $SN$-sets.
Let $\{S_{1}, S_{2}, \ldots, S_{k}\}$ be the set of resulting $SN$-sets.
If each $SN$-set contains only one node, 
HBUILD is applied and if the result is a tree TripNet constructs a binary tree and goes to Step 6. Otherwise TripNet goes to Step 5.
If for some $i$, $|S_{i}|>1$, contract each $S_i$ to a single node $s_i$ and set $S=\{s_1, s_2, \ldots, s_k\}$.
Update the set of triplets by defining
$\tau_{S} = \{s_i  s_j|s_k  : if ~ \exists ~ xy|z \in \tau , ~ x \in S_i, ~ y \in S_j ~$ and $~ z \in S_k \}$. 
Constructs a weighted complete graph $(G_{S}, w_{S})$
with $V(G_{S})=S$ and $w_{S}(s_{i}, s_{j})=\min \{h(x, y) : x \in S_{i} ~ and ~ y \in S_{j}\}$.
Set $(G, w)=(G_{S}, w_{S})$
and TripNet goes to Step 3.
\\
 \\
 The following theorem is consequence of definition $SN$-set for $(G_{S}, w_{S})$.
 
\,\,\,\,\,\,

\begin{theorem}
\label{thormsn} Applying Steps 3 and 4 on $(G_{S}, w_{S})$ and $\tau_{S}$, each
resulting $SN$-set has one member.
\end{theorem}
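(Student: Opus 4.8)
The plan is to argue by contradiction, leveraging the maximality of the original $SN$-sets $S_1, \ldots, S_k$. Since Steps 3 and 4 always return components that are $SN$-sets (of the relevant triplet set), it suffices to prove the stronger statement that \emph{every proper $SN$-set of $\tau_S$ is a singleton}; this immediately forces the particular components produced by Steps 3--4 to be singletons. So I would suppose, for contradiction, that $T \subseteq S$ is a proper $SN$-set of $\tau_S$ with $|T| \geq 2$, and introduce its expansion $U = \bigcup_{s_i \in T} S_i \subseteq L(\tau)$ back into the original taxa. Because the $S_i$ partition $L(\tau)$ and $T \subsetneq S$, there is some $s_r \notin T$, and then $S_r \cap U = \emptyset$, so $U$ is a proper subset of $L(\tau)$.

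The heart of the proof is to show that $U$ is itself an $SN$-set of $\tau$. First I would take an arbitrary triplet $xy|z \in \tau$ with $y,z \in U$ and $x \notin U$ (the forbidden pattern), and locate the blocks $x \in S_r$, $y \in S_p$, $z \in S_q$ using the partition property. Then I would split into two cases. If $p = q$, then $y,z \in S_p$ while $x \notin S_p$, which directly contradicts that $S_p$ is an $SN$-set of $\tau$. If $p \neq q$, then $r,p,q$ are pairwise distinct (since $x \notin U \supseteq S_p \cup S_q$ gives $x \notin S_p$ and $x \notin S_q$), so by the definition of $\tau_S$ the induced triplet $s_r s_p | s_q$ lies in $\tau_S$; but $s_r \notin T$ (as $x \notin U$) while $s_p, s_q \in T$, contradicting that $T$ is an $SN$-set of $\tau_S$. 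Hence no such triplet $xy|z$ exists and $U$ is an $SN$-set of $\tau$.

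To conclude, I would observe that since $|T| \geq 2$ the set $U$ contains at least two blocks and therefore strictly contains some $S_p$; thus $U$ is a proper $SN$-set of $\tau$ with $S_p \subsetneq U$, contradicting the maximality of $S_p$ among proper $SN$-sets. This rules out any nonsingleton proper $SN$-set of $\tau_S$, so applying Steps 3 and 4 to $(G_S, w_S)$ and $\tau_S$ yields only one-element $SN$-sets, as claimed.

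The argument is essentially definitional, so I expect the only real obstacle to be the bookkeeping in the case split: one must verify carefully that in the case $p \neq q$ the induced triplet $s_r s_p | s_q$ is genuinely an element of $\tau_S$ with the correct membership pattern ($s_r \notin T$ but $s_p, s_q \in T$), and separately handle the degenerate case $p = q$, in which no triplet of $\tau_S$ is produced at all but instead the $SN$-property of the single block $S_p$ is the one that is violated. Keeping these two mechanisms distinct, and correctly tracking which element plays the role of the ``outside'' taxon in each forbidden pattern, is where care is needed.
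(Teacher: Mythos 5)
Your middle step --- the lifting lemma --- is correct: if the blocks $S_1,\ldots,S_k$ partition $L(\tau)$, each is an $SN$-set of $\tau$, and $T$ is an $SN$-set of $\tau_S$, then $U=\bigcup_{s_i\in T}S_i$ is an $SN$-set of $\tau$; your two cases $p=q$ and $p\neq q$ handle this exactly as needed, and the paper implicitly relies on the same fact when it asserts that $S_1\cup\cdots\cup S_r$ is an $SN$-set of $\tau$. The genuine gap is in the frame around it. Your opening reduction, ``it suffices to prove that every proper $SN$-set of $\tau_S$ is a singleton,'' replaces the theorem by a \emph{false} statement, and your closing appeal to ``the maximality of $S_p$ among proper $SN$-sets'' assumes a property that Steps 3 and 4 do not provide: the $S_i$ are merely connected components of $(G,h)$ under repeated deletion of maximum-weight edges which happen to be $SN$-sets; nothing makes them maximal $SN$-sets of $\tau$. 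Concretely, take $\tau=\{ab|c,\ ab|d\}$ on $\{a,b,c,d\}$. Then $h(a,b)=1$ and all other heights equal $2$, so Steps 3--4 output $S_1=\{a,b\}$, $S_2=\{c\}$, $S_3=\{d\}$, and $\tau_S=\emptyset$. Now $T=\{s_1,s_3\}$ is a proper $SN$-set of $\tau_S$ with two elements (so your ``stronger statement'' fails), and its expansion $U=\{a,b,d\}$ is a proper $SN$-set of $\tau$ strictly containing $S_1$ --- yet there is no contradiction, because $S_1$ is simply not a maximal $SN$-set of $\tau$. Meanwhile the theorem itself does hold on this instance: all weights $w_S$ equal $2$, so the process on $(G_S,w_S)$ immediately splits into singletons.

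What your argument never uses, and what the statement actually depends on, is the weight structure $w_S(s_i,s_j)=\min\{h(x,y):x\in S_i,\ y\in S_j\}$: the theorem is a claim about the $SN$-sets that the edge-removal process \emph{outputs}, not about all $SN$-sets of $\tau_S$. The paper's proof is process-based: if Steps 3--4 on $(G_S,w_S)$ produced a component $\{s_1,\ldots,s_r\}$ with $r\geq 2$ that is an $SN$-set of $\tau_S$, then, because each edge of $G_S$ of weight below a threshold comes from a pair $x\in S_i$, $y\in S_j$ with $h(x,y)$ below the same threshold, removing edges above the corresponding (strictly larger) threshold in $(G,h)$ would leave $S_1\cup\cdots\cup S_r$ connected; by the lifting argument it is an $SN$-set of $\tau$, so the original run of Steps 3--4 on $(G,h)$ would have halted at this component rather than refining it into $S_1,\ldots,S_r$, contradicting how the $S_i$ were defined. (The paper's own write-up of this threshold bookkeeping is terse, but the weight correspondence is the essential ingredient.) So keep your lifting lemma, but the contradiction must come from the mechanics of the removal process --- comparing deletion thresholds in $G$ and $G_S$ --- not from a maximality property the algorithm never guarantees.
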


\begin{proof} 
Suppose that $S=\{s_1, s_2, s_3, \ldots, s_r\}$ is an $SN$-set  in $(G_{S}, w_{S})$. Now assume that in the procedure of Step 3 by removing the edges with weight $l$, $S_1$ separates from $S_2$. Thus there exists $k > l$ such that by removing 
the edges with weight at least $k$ in $(G_{S}, w_{S})$, the connected component $S$ seperates from other components of $G_S$.
It means that by removing the edges with weight at least $k$ in $G$, we obtain the $SN$-set $S_1 \cup \ldots \cup S_r$ in $\tau$ which is 
a contradiction.
\end{proof}

\,\,\,\,\,\,\,

In the next step the reticulation leaves are recognized using the following three criteria:

{\bf Criterion I}:
Let $m_i$ and $M_i$ be the
minimum and maximum weight of the edges in $(G,h)$ with exactly one end in 
$S_i$. 
Choose the node with minimum
$m_i$ and if there is more than one node with minimum $m_i$ then
choose among them the nodes which has minimum $M_i$. Let $R_1$ denotes the set of such nodes.

{\bf Criterion II}: Let $w_{min}= min \{w(s_{i},s_{j}) : 1 \leq i, j \leq k \}$.
In $G_{S}$ consider the induced subgraph on the edges with the weight $w_{min}$.
Choose the nodes of $R_{1}$ with the maximum degree in this induced subgraph.
Let $R_2$ denotes the set of such nodes.

{\bf Criterion III}: 
For each node $s \in R_{2}$, remove it from $G_{S}$ and find $SN$-sets
for this new graph using Steps 3 and 4. Let $n_{s}$ be the number of $SN$-sets
of this new graph with cardinality greater than one. Choose the nodes in $R_{2}$ with maximum
$n_{s}$.
Let $R_3$ denotes the set of such nodes.

We state an example to show
the idea behind these three criteria.

Let $\tau=\{ij|l, jk|i, kl|j, kl|i, no|m, lo|k, jl|o, mn|l, mn|j, no|k, mo|i, jk|n, ij|o, ik|m, il|n\}$. 
$\tau$ is not consistent with a tree but it is consistent with the network $N$ shown in Fig. \ref{netstep5}.  
Obviously, $N$ is an optimal network consistent with $\tau$. 
In order to find $SN$-sets we 
construct $G'_{\tau}$ and $(G, h)$,  
and find $SN$-sets from $(G, h)$ using Steps 3 and 4 (Figs. \ref{bothnotdaganddag} to \ref{gh1r}).
It follows that $S=\{\{i\}, \{j\}, \{k\}, \{l\}, \{m\}, \{n, o\}\}$. Now in $G_{S}$ (Fig \ref{ghu}).
we expect that the reticulation is in $R_1$. In this example
both $k$ and $l$ are in $R_1$.
Also we expect that if there is a reticulation leaf, it belongs to $R_2$ which again both $k$ and $l$ are in $R_2$.
Now just $l$ belongs to $R_3$.
Thus we consider $l$ as the reticulation leaf
 (Figs \ref{ghu4retr} to \ref{ghu3retr6543r}).
Remove triplets from $\tau_{S}$ which contain $l$ and denote the new set of triplets by $\tau'_{S}$. Obviously,
$\tau'_{S}$ is consistent with a tree. We add this reticulation leaf to a binarization of $T_{\tau'_{S}}$
such that the resulting network is consistent with $\tau_{S}$. Note that if we consider each node except than $l$
as the reticulation leaf then final network consistent with $\tau_{S}$ has at least two reticulation leaves.
\\
\begin{figure}[h]
\centering \subfigure[$N$]{
\includegraphics[scale=.4]{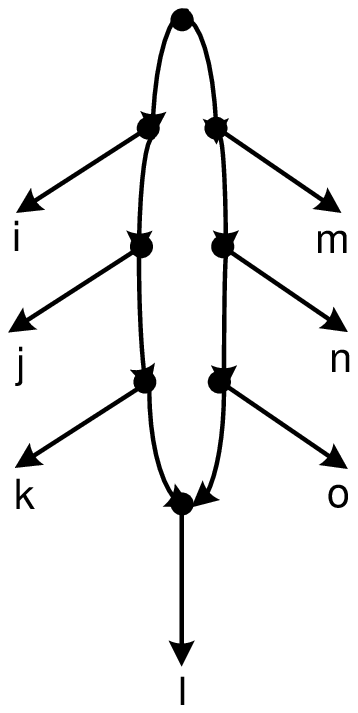}
\label{netstep5} } \subfigure[$G'_{\tau}$ is obtained from $G_{\tau}$ by removing the dotted line.]{
\includegraphics[scale=.5]{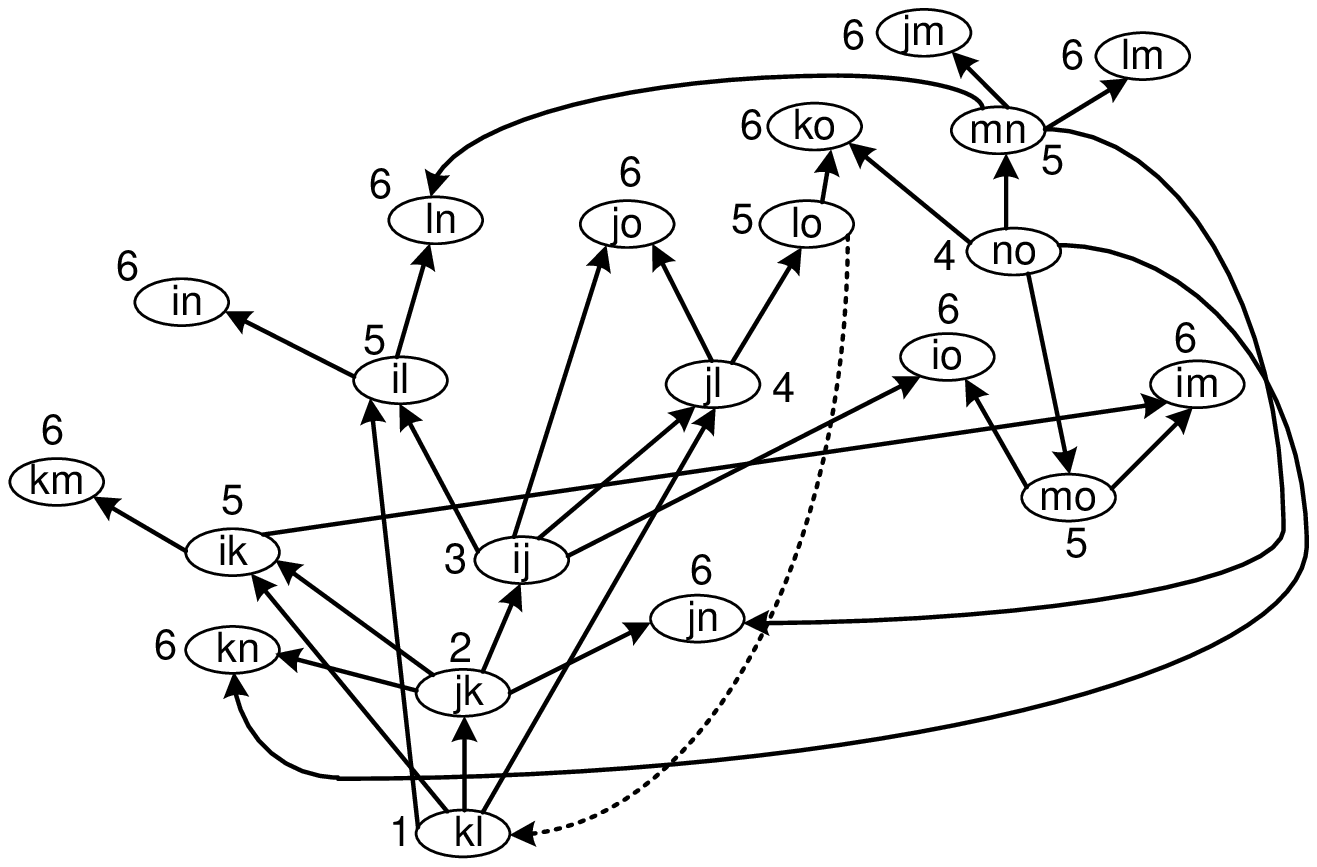}
\label{bothnotdaganddag} }  \subfigure[$(G, h)$]{
\includegraphics[scale=0.5]{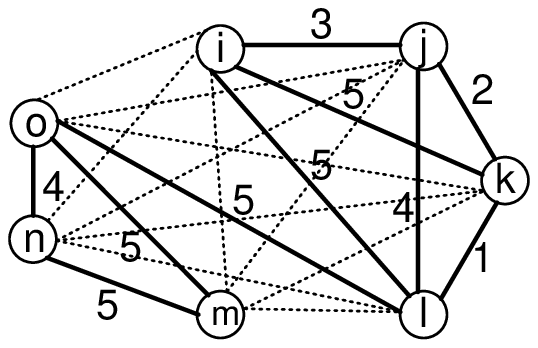}
\label{ghstep5} } \subfigure[]{
\includegraphics[scale=0.5]{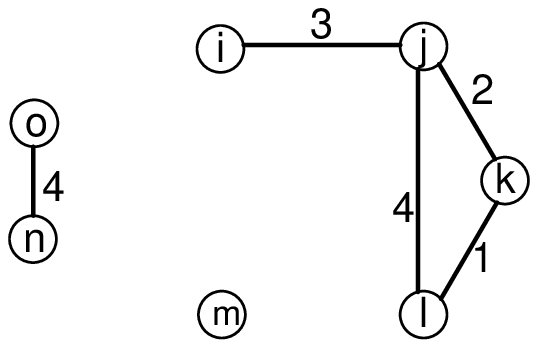}
\label{gh65r} } \subfigure[]{
\includegraphics[scale=0.5]{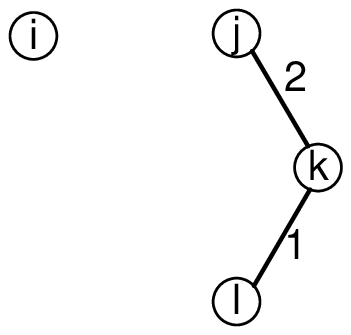}
\label{gh43r} } \subfigure[]{
\includegraphics[scale=0.5]{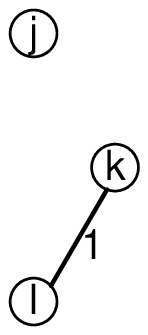}
\label{gh2r} } \subfigure[]{
\includegraphics[scale=0.5]{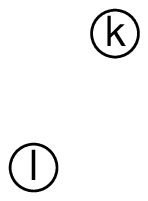}
\label{gh1r} } \subfigure[$G_{S}$]{
\includegraphics[scale=0.5]{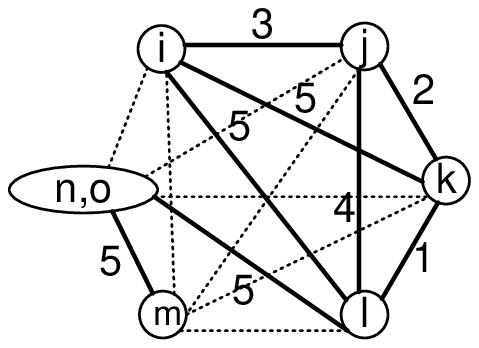}
\label{ghu} } \subfigure[]{
\includegraphics[scale=0.5]{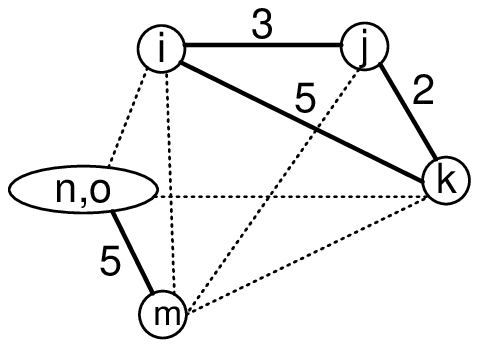}
\label{ghu4retr} } \subfigure[]{
\includegraphics[scale=0.5]{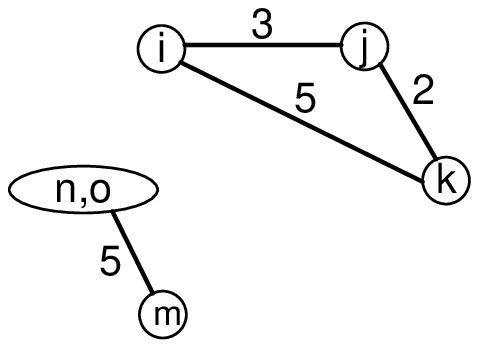}
\label{ghu4retr6r} }\subfigure[]{
\includegraphics[scale=0.5]{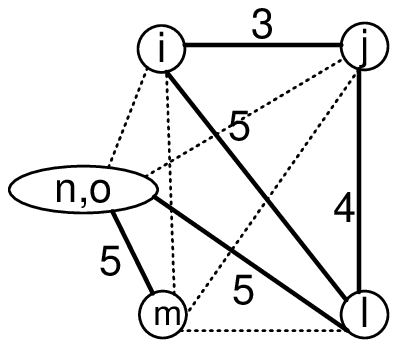}
\label{ghu3retr} }\subfigure[]{
\includegraphics[scale=0.5]{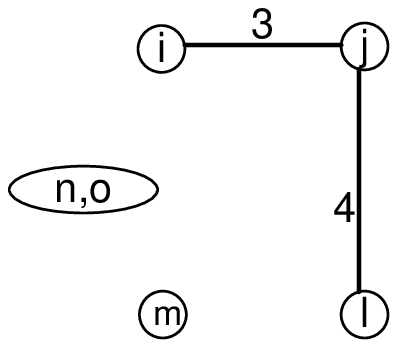}
\label{ghu3retr65r} }\subfigure[]{
\includegraphics[scale=0.5]{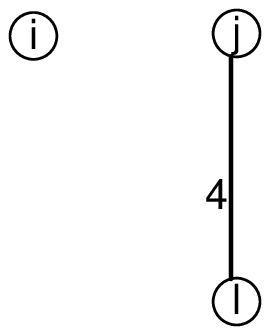}
\label{ghu3retr654r} }\subfigure[]{
\includegraphics[scale=0.5]{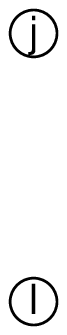}
\label{ghu3retr6543r} }
\label{step5exampleall} \caption{Edges with weight 6 are shown by dotted lines.  }
\end{figure}

{\bf Step 5} : In this step the reticulation leaf is recognized using three criteria.
Do the criterion I. If $|R_{1}|=1$ then choose the node $x \in R_{1}$ as the reticulation node.
Otherwise if $|R_{1}|>1$ do the criterion II.
If $|R_{2}|=1$ then choose the node $x \in R_{2}$ as the reticulation node.
Otherwise if $|R_{2}|>1$ do the criterion III.
If $|R_{3}|=1$ then choose the node $x \in R_{3}$ as the reticulation node.
Otherwise if $|R_{3}|>1$ then
by the speed options we choose the reticulation node as follows.

{\it Slow}: Each node in $R_{3}$ is examined as the reticulation leaf.

{\it Normal}: Two nodes in $R_{3}$ are selected randomly and each of these two nodes
is examined as the reticulation leaf.

{\it Fast}: One node in $R_{3}$ selected randomly 
as the reticulation leaf.

Let $x$ be a node which is considered as a reticulation leaf.
Remove $x$ from $G_{S}$ and all of the triplets which contain $x$ from $\tau_{S}$.
Define $G=G \setminus \{x\}$ and go to Step 3.
\\
\\
Note that for the Fast option the running time of the algorithm is polynomial.
For biological data almost always the criteria I and II find a unique reticulation leaf.
So on real data the running time of TripNet is almost always polynomial.
\\
\\
{\bf Step 6} : Let $x_1, x_2, \ldots, x_m$ be $m$ reticulation leaves which are obtained in
Step 5 with this order and $T$ be the tree that is constructed in Step 4.
Now add these $m$ nodes in the reverse order to $T$ as what follows.
Let $e_1$ and $e_2$ be two edges of $T$. Consider two new nodes $y_1$ and $y_2$ in the middle of
$e_1$ and $e_2$. Connect $y_1$ and $y_2$ to a new node $y_3$ and connect the reticulation leaf $x_m$ to $y_3$. 
Do this procedure for all  pairs of edges and
choose a pair such that the resulting network is consistent with maximum number of triplets in $\tau$.
Triplet continue this procedure until all the reticulation nodes are added.
\\
\\
{\bf Step 7} : For each $SN$-set, $S_i$, and the set $\tau_{S_{i}}$ of triplets we run the algorithm again.
\\
\\
{\bf Step 8} : Replace each $SN$-set in the network of
Step 6 with its related network constructed in Step
7 to obtain a network $N'$.
\\
\\
Let $\tau' \in \tau$ be the set of the triplets 
which are not consistent with $N'$. For each pair of leaves $a$ and $b$
assume that $\tau'_{ab}$ is the set of triplets in $\tau'$ which
are of the form $ab|c$. Consider the pair of leaves $i$ and $j$ such that
$\tau'_{ij}$ has the maximum cardinality. 
Assume that $p_{i}$ and $p_{j}$ are the parents of $i$ and $j$,
respectively. 
\\
\\
{\bf Step 9} : Create two new nodes in the middle of the edges
$p_{i}i$ and $p_{j} j$ and connect them with a new edge. This new edge creates a
reticulation node and all of the triplets in $\tau'_{ij}$ will be
consistent with the new network. 
All consistent triplets with the new network are removed from $\tau'$ and this
procedure will continue until $\tau'$ becomes empty. 
\\
\\
Fig. \ref{tripsteps}
presents an example of the algorithm with all of its Steps.

\begin{figure}[ht]
\begin{center}
\includegraphics[width=7.cm]{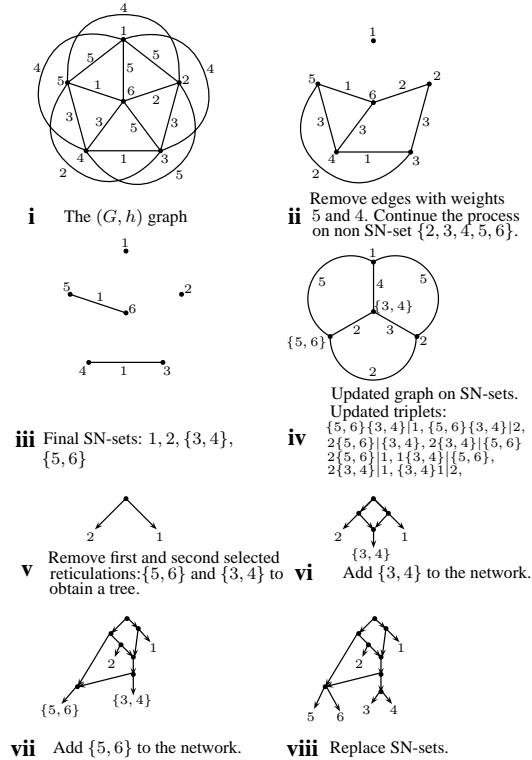}
\end{center}
\caption{Steps of our algorithm for input triplets:  $2~3|1$,
$4~ 1|2$, $5~2|1$, $2~6|1$, $3~4|1$, $1~3|5$, $1~3|6$, $5~4|1$,
 $4~6|1$, $5~6|1$, $3~4|2$, $3~5|2$, $2~6|3$, $4~5|2$, $2~4|6$, $5~6|2$,
 $3~4|5$, $3~4|6$, $5~6|3$, $5~6|4$. } \label{tripsteps}
\end{figure}

\section*{Acknowledgment}
The authors would like to thank M.Kargar for his kind and useful
comments. Changiz Eslahchi and Hadi Poormohammadi
would like to thank Shahid Beheshti University for its support. This work was partially supported by a grant from IPM.

\end{document}